\newtheorem{thm}{Theorem}[section]
\newtheorem{corollary}[thm]{Corollary}
\newtheorem{proposition}[thm]{Proposition}
\theoremstyle{definition}
\newtheorem{definition}[thm]{Definition}
\newtheorem{example}[thm]{Example}
\newtheorem{remark}[thm]{Remark}
\begin{document}

\newcommand{\comment}[1]{{\color{blue}\rule[-0.5ex]{2pt}{2.5ex}}
\marginpar{\scriptsize\begin{flushleft}\color{blue}#1\end{flushleft}}}

\newcommand{\be}{\begin{equation}}
\newcommand{\ee}{\end{equation}}
\newcommand{\bea}{\begin{eqnarray}}
\newcommand{\eea}{\end{eqnarray}}
\newcommand{\bean}{\begin{eqnarray*}}
\newcommand{\eean}{\end{eqnarray*}}

\newcommand{\id}{\relax{\rm 1\kern-.28em 1}}
\newcommand{\R}{\mathbb{R}}
\newcommand{\C}{\mathbb{C}}
\newcommand{\Z}{\mathbb{Z}}
\newcommand{\Q}{\mathbb{Q}}
\newcommand{\g}{\mathfrak{G}}
\newcommand{\e}{\epsilon}

\newcommand{\hs}{\hfill\square}
\newcommand{\hbs}{\hfill\blacksquare}

\newcommand{\bp}{\mathbf{p}}
\newcommand{\bmax}{\mathbf{m}}
\newcommand{\bT}{\mathbf{T}}
\newcommand{\bU}{\mathbf{U}}
\newcommand{\bP}{\mathbf{P}}
\newcommand{\bA}{\mathbf{A}}
\newcommand{\bm}{\mathbf{m}}
\newcommand{\bIP}{\mathbf{I_P}}

\newcommand{\cA}{\mathcal{A}}
\newcommand{\cB}{\mathcal{B}}
\newcommand{\cC}{\mathcal{C}}
\newcommand{\cI}{\mathcal{I}}
\newcommand{\cO}{\mathcal{O}}
\newcommand{\cG}{\mathcal{G}}
\newcommand{\cJ}{\mathcal{J}}
\newcommand{\cF}{\mathcal{F}}
\newcommand{\cP}{\mathcal{P}}
\newcommand{\ep}{\mathcal{E}}
\newcommand{\E}{\mathcal{E}}
\newcommand{\cH}{\mathcal{O}}
\newcommand{\cPO}{\mathcal{PO}}
\newcommand{\cl}{\ell}
\newcommand{\cFG}{\mathcal{F}_{\mathrm{G}}}
\newcommand{\cHG}{\mathcal{H}_{\mathrm{G}}}
\newcommand{\Gal}{G_{\mathrm{al}}}
\newcommand{\cQ}{G_{\mathcal{Q}}}

\newcommand{\ri}{\mathrm{i}}
\newcommand{\re}{\mathrm{e}}
\newcommand{\rd}{\mathrm{d}}

\newcommand{\rGL}{\mathrm{GL}}
\newcommand{\rSU}{\mathrm{SU}}
\newcommand{\rSL}{\mathrm{SL}}
\newcommand{\rSO}{\mathrm{SO}}
\newcommand{\rOSp}{\mathrm{OSp}}
\newcommand{\rSpin}{\mathrm{Spin}}
\newcommand{\rsl}{\mathrm{sl}}
\newcommand{\rM}{\mathrm{M}}
\newcommand{\rdiag}{\mathrm{diag}}
\newcommand{\rP}{\mathrm{P}}
\newcommand{\rdeg}{\mathrm{deg}}

\newcommand{\M}{\mathrm{M}}
\newcommand{\End}{\mathrm{End}}
\newcommand{\Hom}{\mathrm{Hom}}
\newcommand{\diag}{\mathrm{diag}}
\newcommand{\rspan}{\mathrm{span}}
\newcommand{\rank}{\mathrm{rank}}
\newcommand{\Gr}{\mathrm{Gr}}
\newcommand{\ber}{\mathrm{Ber}}

\newcommand{\fsl}{\mathfrak{sl}}
\newcommand{\fg}{\mathfrak{g}}
\newcommand{\ff}{\mathfrak{f}}
\newcommand{\fgl}{\mathfrak{gl}}
\newcommand{\fosp}{\mathfrak{osp}}
\newcommand{\fm}{\mathfrak{m}}

\newcommand{\str}{\mathrm{str}}
\newcommand{\Sym}{\mathrm{Sym}}
\newcommand{\tr}{\mathrm{tr}}
\newcommand{\defi}{\mathrm{def}}
\newcommand{\Ber}{\mathrm{Ber}}
\newcommand{\spec}{\mathrm{Spec}}
\newcommand{\sschemes}{\mathrm{(sschemes)}}
\newcommand{\sschemeaff}{\mathrm{ {( {sschemes}_{\mathrm{aff}} )} }}
\newcommand{\rings}{\mathrm{(rings)}}
\newcommand{\Top}{\mathrm{Top}}
\newcommand{\sarf}{ \mathrm{ {( {salg}_{rf} )} }}
\newcommand{\arf}{\mathrm{ {( {alg}_{rf} )} }}
\newcommand{\odd}{\mathrm{odd}}
\newcommand{\alg}{\mathrm{(alg)}}
\newcommand{\sa}{\mathrm{(salg)}}
\newcommand{\sets}{\mathrm{(sets)}}
\newcommand{\SA}{\mathrm{(salg)}}
\newcommand{\salg}{\mathrm{(salg)}}
\newcommand{\varaff}{ \mathrm{ {( {var}_{\mathrm{aff}} )} } }
\newcommand{\svaraff}{\mathrm{ {( {svar}_{\mathrm{aff}} )}  }}
\newcommand{\ad}{\mathrm{ad}}
\newcommand{\Ad}{\mathrm{Ad}}
\newcommand{\pol}{\mathrm{Pol}}
\newcommand{\Lie}{\mathrm{Lie}}
\newcommand{\Proj}{\mathrm{Proj}}
\newcommand{\rGr}{\mathrm{Gr}}
\newcommand{\rFl}{\mathrm{Fl}}
\newcommand{\rPol}{\mathrm{Pol}}
\newcommand{\rdef}{\mathrm{def}}

\newcommand{\uspec}{\underline{\mathrm{Spec}}}
\newcommand{\uproj}{\mathrm{\underline{Proj}}}

\newcommand{\sym}{\cong}

\newcommand{\al}{\alpha}
\newcommand{\lam}{\lambda}
\newcommand{\de}{\delta}
\newcommand{\ttau}{\tilde \tau}
\newcommand{\D}{\Delta}
\newcommand{\s}{\sigma}
\newcommand{\lra}{\longrightarrow}
\newcommand{\ga}{\gamma}
\newcommand{\ra}{\rightarrow}

\newcommand{\NOTE}{\bigskip\hrule\medskip}


\smallskip

\medskip

 \centerline{\LARGE \bf On Chiral Quantum Superspaces}

\vskip 1cm

\centerline{ D. Cervantes.}

\smallskip

\centerline{\it Instituto de Ciencias Nucleares,
Universidad Nacional Aut\'{o}noma de M\'{e}xico }
 \centerline{\it Circuito Exterior M\'{e}xico D.F. 04510, M\'{e}xico}
\centerline{{\footnotesize email: daliac@nucleares.unam.mx}}

\bigskip

\centerline{ R. Fioresi
}

\smallskip

\centerline{\it Dipartimento di Matematica, Universit\`{a} di
Bologna }
 \centerline{\it Piazza di Porta S. Donato, 5. 40126 Bologna. Italy.}
\centerline{{\footnotesize e-mail: fioresi@dm.UniBo.it}}

\bigskip

\centerline{ M. A. Lled\'{o} }

\smallskip

 \centerline{\it  Departamento de F\'{\i}sica Te\`{o}rica,
Universitat de Val\`{e}ncia and IFIC (CSIC-UVEG)}
\centerline{\it  Fundaci\'{o} General Universitat de Val\`{e}ncia.}
 \centerline{\small\it C/Dr.
Moliner, 50, E-46100 Burjassot (Val\`{e}ncia), Spain.}
 \centerline{{\footnotesize e-mail: maria.lledo@ific.uv.es}}

\vskip 1cm

\begin{abstract}
We give a quantum deformation
of the  chiral Minkowski superspace in 4 dimensions embedded as
the big cell into
the chiral conformal superspace. Both deformations are
realized as quantum homogeneous superspaces:
we deform the ring of regular functions
together with a coaction of the corresponding quantum supergroup.
\end{abstract}

\section{Introduction}

In his foundational work on supergeometry \cite{ma1} Manin
realized the Minkowski superspace as the big
cell inside the flag supermanifold of $2|0$ and $2|1$ superspaces in the
superspace of dimension $4|1$.

\medskip

In his construction however, the actions of the Poincar\'{e}
and the conformal supergroups on the super Minkowski and its
compactification were left in the background
and did not play a crucial role.
Moreover there was no explicit construction of the coordinate
rings associated with the super Minkowski space and the conformal
superspace together with their embedding into a suitable projective
superspace. Such coordinate rings are necessary in order to construct
a quantum deformation.

\medskip

Our intention is to fill this gap, by bringing the supergroup action
to the center of the stage so that we can give explicitly the coordinate
rings of the super Minkowski and
conformal superspaces together with their embeddings
into projective superspace.
This will be our starting point to build a quantum deformation of them.
We shall concentrate our
attention in realizing the chiral super Minkowski space as the big cell in the
super Grassmannian variety
of $2|0$ superspaces in $\C^{4|1}$ (the chiral conformal superspace). This is not
precisely the same supervariety that Manin considers in his work; the Grassmannian is a simpler one, but it also has a physical meaning. Our choice is motivated because in some supersymmetric theories {\it chiral superfields} appear naturally. Chiral superfields, in our approach, are  identified with elements of the coordinate superalgebra of the above mentioned Grassmannian. If one wants to formulate certain supersymmetric field  theories in a noncommutative superspace one
needs to have the notion of quantum chiral superfields. It is not obvious in other approaches how to construct a quantum chiral superalgebra without loosing other properties, as the action of the group, for example. In our construction the quantum chiral superfields appear naturally together with the supergroup action.

 We plan to explore in a forthcoming paper Manin's
construction in this new framework.

\medskip

We shall not go into the details of the proofs of all of our statements,
since an enlarged version of part of this work is available in
Ref. \cite{cfl}; neverthless we shall make a constant effort to convey the
key ideas and steps of our constructions.

\medskip

This is the content of the present paper.

\medskip

In section \ref{supergeo} we briefly outline few key facts
of supergeometry, favouring intuition over rigorous definitions.
Our main reference will be Ref. \cite{cf}.

\medskip

In section \ref{conformal} we discuss the chiral conformal superspace
as an homogeneous superspace identified with the super Grassmannian
variety of $2|0$ superspaces in the complex vector superspace of dimension
$4|1$. We also provide an explicit projective
embedding of the super Grassmannian into a suitable projective superspace.

\medskip

In Section \ref{invar} we give an equivalent approach via invariant theory
to the theory discussed in Section \ref{conformal}.

\medskip

In Section \ref{smink} we introduce the complex super Minkowski space as the big
cell in the chiral conformal superspace.
We also provide an explicit description
of the action of the super Poincar\'{e} group.

\medskip

In Sections \ref{qconformal} and
\ref{qmink} we build a quantum deformation of the Minkowski
superspace and its compactification together with a coaction of the
quantum Poincar\'{e} and conformal supergroups.

\medskip

Finally in Section \ref{chiral} we discuss
some relevant physical applications of the theory
developed so far.

\medskip

{\bf Acknoledgements}. The authors wish to thank the UCLA Department
of Mathematics for the wonderful hospitality during the workshop, that
made the present work possible. The authors wish also to thank prof.
V. S. Varadarajan for the many helpful discussions on supergeometry
and supergroups.

\section{Basic concepts in Supergeometry}
\label{supergeo}

Supergeometry is essentially $\Z_2$-graded geometry: any geometrical
object is given a $\Z_2$-grading in some natural way and the morphisms
are the maps respecting the geometric structure and the $\Z_2$-grading.

\medskip

For instance, a super vector space $V$ is a vector space where we
establish a $\Z_2$-grading by giving a splitting $V_0 \oplus V_1$.
The elements in $V_0$ are called {\it even}
and the elements in $V_1$ are called {\it odd}.
Hence we have a function $p$
called the \textit{parity} defined only on homogeneous elements.
A \textit{superalgebra} $A$ is a super vector space with multiplication
preserving parity. The \textit{reduced superalgebra} associated with $A$
is $A_r:=A/I_{\mathrm{odd}}$, where $I_{\mathrm{odd}}$ is the ideal generated by the
odd nilpotents. Notice that the reduced superalgebra $A_r$ may have
even nilpotents, thus making the terminology a bit awkward.

\medskip

A superalgebra $A$ is \textit{commutative} if
$$
xy=(-1)^{p(x)p(y)}yx
$$
for all
$x$, $y$ homogeneous elements in $A$.
>From now on we assume all superalgebras are to be
commutative unless otherwise specified
and their category is
denoted with $\salg$.
We also need to introduce the notion of \textit{affine superalgebra}.
This is a finitely generated superalgebra such that $A_r$ has no
nilpotents. In ordinary algebraic geometry such $A_r$'s are
associated bijectively to affine algebraic varieties, as we
are going to see.

\medskip

The most interesting objects in supergeometry are the
\textit{algebraic supervarieties} and the
\textit{differentiable supermanifolds}. Both these concepts
are encompassed by the idea of \textit{superspace}.

\begin{definition}
We define \textit{superspace} the pair
$S=(|S|, \cO_S)$ where $|S|$ is a topological space and $\cO_S$ is
a sheaf of superalgebras such that the stalk at
a point  $x\in |S|$ denoted by $\cO_{S,x}$ is a local superalgebra
for all $x \in |S|$.

A {\it morphism} $\phi:S \lra T$ of superspaces is given by
$\phi=(|\phi|, \phi^\#)$, where $\phi: |S| \lra |T|$ is a map of
topological spaces and $\phi^\#:\cO_T \lra \phi_*\cO_S$ is
a sheaf morphism such that
$\phi_x^\#(\bm_{|\phi|(x)})=\bm_x$ where $\bm_{|\phi|(x)}$ and
$\bm_{x}$ are the maximal ideals in the stalks $\cO_{T,|\phi|(x)}$
and $\cO_{S,x}$ respectively.
\end{definition}

Let us see an important example.

\begin{example}
The superspace $\R^{p|q}$ is the topological space $\R^p$ endowed with
the following sheaf of superalgebras. For any $U
\subset_\mathrm{open} \R^p$
$$
\cO_{\R^{p|q}}(U)=C^{\infty}(\R^p)(U)\otimes \R[\xi^1, \dots, \xi^q],
$$
where $\R[\xi_1, \dots ,\xi_q]$ is the exterior algebra (or {\it
Grassmann algebra}) generated by the $q$ variables $\xi_1 ,\dots,
\xi_q$.
\end{example}

\begin{definition}\label{supermanifold}
A  \textit{supermanifold} of dimension $p|q$ is a superspace $M=(|M|, \cO_M)$
which is locally isomorphic to the superspace
$\R^{p|q}$, i. e. for all $x \in |M|$
there exist an open set $V_x \subset |M|$ and  $U \subset \R^{p|q}$
such that:
$$
{\cO_{M}}|_{V_x} \cong {\cO_{\R^{p|q}}}|_U.
$$
\end{definition}

We shall now concentrate on the study of algebraic supervarieties,
since our purpose is
to obtain quantum deformations and for this reason the algebraic
approach is to be preferred.

\medskip

There are two equivalent and quite different approaches to
both, algebraic supervarieties and differentiable supermanifolds:
the sheaf theoretic and the functor of points categorical approach.
In the first of these approaches an algebraic
supervariety (resp. a supermanifold) is to be understood
as a superspace, that is a
pair consisting of a topological space and a sheaf of
superalgebras. In the special cases of
an affine algebraic supervariety (resp. a differentiable
supermanifold), the superalgebra of global
sections of the sheaf allows us to reconstruct the whole sheaf
and the underlying topological space (see \cite{cf} ch. 4 and 10).
Consequently an affine supervariety (resp. a differentiable
supermanifold) can be effectively
identified with a commutative superalgebra.

\medskip

This is the super counterpart to the well known result of
ordinary complex algebraic geometry:
affine varieties are in one-to-one correspondence
with their coordinate rings, in other words,  we associate
the zeros of a set of polynomials into some affine space to the ideal
generated by such polynomials. For example we associate to the
complex sphere in $\C^3$, the coordinate ring $\C[x,y,z]/(x^2+y^2+z^2-1)$.

We also say that there is an equivalence of
categories between the category of affine supervarieties and the category
of affine superalgebras.
Besides the above mentioned correspondence, this amounts to the fact
that morphisms of affine
varieties correspond to morphisms of the correspondent
coordinate rings.

\medskip

We can take the same point of view in supergeometry and
give the following definition.

\begin{definition}
Let $\cO(X)$ be an affine superalgebra.
We define \textit{affine supervariety} $X$ associated with $\cO(X)$
the superspace $(|X|, \cO_X)$, where $|X|$ is
the topological space of an ordinary affine variety, while
$\cO_X$ is the (unique) sheaf of superalgebras, whose global
sections coincide with $\cO(X)$,
and there exists an open cover $U_i$ of $|X|$ such that
$$
\cO_X(U_i)=\cO(X)_{f_i}=\left\{{g \over f_i} \, \big| \, g \in \cO(X) \right\}
$$
for suitable $f_i \in \cO(X)_0$.
(for more details see \cite{eh} ch. II and \cite{cf} ch. 10).

A \textit{morphism} of affine supervarieties is a morphism of
the underlying superspaces, though one readily see it corresponds
(contravariantly) to a morphism of the corresponding coordinate superalgebras:
$$
\hbox{\{morphisms $X \lra Y$ \} }
\qquad \longleftrightarrow  \qquad
\hbox{ \{morphisms $\cO(Y) \lra \cO(X)$ \} }
$$

We define \textit{algebraic supervariety} a superspace
which is locally isomorphic to an affine supervariety. $\hfill\Box$
\end{definition}

\medskip

\begin{example} \label{affinespace} ${}^{}$

{\sl 1. The affine superspace.}
We define the
\textit{polynomial superalgebra} as:
$$
\C[x^1,\dots ,x^p,\theta^1,\dots ,\theta^q]
:= \C[x^1,\dots ,x^p]\otimes \Lambda (\theta^1,\dots ,\theta^q).
$$
We want to interpret this superalgebra as
the coordinate superring of a supervariety that we call
the \textit{affine superspace} of superdimension $p|q$,
and we shall denote with the symbol $\C^{p|q}$ or $\bA^{m|n}$.
The underlying topological space is $\bA^m$, that
is $\C^m$ with the Zariski topology, while the sheaf is:
$$
\cO_{\bA^{m|n}}(U):=\cO_{\bA^m}(U) \otimes \Lambda(\theta^1 \dots \theta^n).
$$

\medskip

{\sl 2. The supersphere.}
The superalgebra $\C[x_1,x_2,x_3]/(x_1^2+x_2^2+x_3^2+
\eta_1 x + \eta_2 x_2 +\eta_3 x_3 - 1)$ is the superalgebra of the
global sections of an affine supervariety whose underlying topological
space is the unitary sphere in $\bA^3$.

\end{example}

The first important example of a supervariety which is not affine
is given by the \textit{projective superspace}.

\begin{example} ${}^{}$

\label{projsup} {\sl 1. Projective superspace.}
Consider the $\Z$-graded superalgebra $S=\C[x_0 \dots x_{m}, \xi_1 \dots
\xi_n]$.
For each $r$, $0 \leq r \leq  m$, we consider the graded superalgebra
$$
S[r]=\C[x_0,\dots ,x_m, \xi_1,\dots, \xi_n][x_r^{-1}],\qquad
\deg(x_r^{-1})=-1.
$$
The subalgebra $S[r]^0\subset S[r]$ of $\Z$-degree 0 is
\begin{equation}
S[r]^0\approx \C[u_0,\dots,\hat u_r ,\cdots, u_m,\eta_1,\dots
\eta_n],\qquad u_s=\frac {x_s}{x_r},\; \eta_\alpha=\frac
{\xi_\alpha}{x_r}, \label{homogeneous}
\end{equation}
(the `$\;\,\hat{}\;\,$' means that this generator is omitted).
This is an affine superalgebra and it corresponds to an affine superspace,
(see \ref{affinespace}) whose topological space we denote with
$|U_r|$ and the corresponding sheaf with $\cO_{U_r}$.
Notice that the topological spaces $|U_r|$ form an affine open
cover of $|\bP^m|$, the ordinary projective space of
dimension $m$.

A direct calculations shows that:
$$
\cO_{U_r}|_{|U_r| \cap |U_s|}=
\cO_{U_s}|_{|U_r| \cap |U_s|},
$$
so we conclude that there exists a unique sheaf on
the topological space $|\bP^m|$, that we
denote as  $\cO_{\bP^{m|n}}$, whose restriction to $|U_i|$ is
$\cO_{U_i}$. Hence we have defined a supervariety that we
denote with $\bP^{m|n}$ and call the \textit{
projective superspace} of dimension $m|n$.

\medskip

\noindent
{\sl 2. Projective supervarieties.}

\noindent
Let $I\subset S=\C[x_1 \dots x_m, \xi_1 \dots \xi_n]$ be a
homogeneous ideal; then $S/I$ is also a graded superalgebra and we
can repeat the same construction as above.
First of all, we notice that the reduced algebra $(S/I)_r$ corresponds
to an ordinary projective variety, whose topological space we denote
with $|X|$, embedded into a projective superspace $|X| \subset |\bP^m|$.
Consider the
superalgebra of $\Z$-degree zero elements in $(S/I)[x_i^{-1}]$
(this is called \textit{projective localization}):
\begin{eqnarray*}
\left(\frac {\C[x_0,\dots x_m,\xi_1\dots \xi_n]}{I}[x_i^{-1}]\right)_0 \cong
\frac {\C[u_0,\dots,\hat u_i,\dots u_m,\eta_1\dots
\eta_n]}{I_{\mathrm{loc}}},
\end{eqnarray*}
where $I_{\mathrm{loc}}$ are the even
elements of $\Z$-degree zero in $I[x_i^{-1}]$.

Again this affine superalgebra defines an affine supervariety with
topological space $|V_i|
\subset |U_i| \subset |\bP^m|$ and sheaf $\cO_{V_i}$.
One can check that the supersheaves $\cO_{V_i}$ are such that
$\cO_{V_i}|_{|V_i| \cap |V_j|}=\cO_{V_j}|_{|V_i| \cap |V_j|}$,
so they glue to give a sheaf on $|X|$. Hence
as before there exists a supervariety corresponding to the
homogeneous superring $S/I$. This supervariety comes equipped
with a projective embedding, encoded by the morphism of graded
superalgebra $S \lra S/I$, hence $(|X|, \cO_X)$ is called
a \textit{projective supervariety}. $\hfill\Box$
\end{example}

It is very important to remark that, contrary
to the affine case, there is no coordinate
superring associated instrinsecally to a projective supervariety,
but there is a coordinate superring associated with the projective
supervariety and its projective embedding. In other words we can
have the same projective variety admitting non isomorphic
coordinate superrings with respect to two different projective
embeddings.

\medskip

We now want to introduce the functor of points approach
to the theory of supervarieties.

Classically we can examine the points of a variety over
different fields and rings. For example we can look at the rational
points of the complex sphere described above. They are in
one to one correspondence with the morphisms:
$\C[x,y,z]/(x^2+y^2+z^2-1) \lra \Q$. In fact each such morphism
is specified by the knowledge of the images
of the generators.
The idea behind the functor of points is to extend this and consider
{\sl all} morphisms from the coordinate ring of
the affine supervariety to {\sl all} superalgebras at once.

\begin{definition}
Let $\cA \in \salg$, the category of commutative superalgebras.
We define the $\cA$-points of an affine supervariety $X$
as the (superalgebra) morphisms $\mathrm{Hom}(\cO(X), \cA)$.
We define the functor of points of $X$ as:
$$
h_X:\salg \lra \sets, \qquad h_X(\cA)={\mathrm Hom}(\cO(X), \cA).
$$
In other words $h_X(\cA)$ are the $\cA$-points of $X$,
for all commutative superalgebras $\cA$.
\end{definition}

\begin{example} \label{affinefopts}
If $\cA$ is a generic (commutative) superalgebra, an $\cA$-point of
$\C^{p|q}$ (see Example \ref{affinespace}) is given by a morphism
$\C[x^1,\dots ,x^p,\theta^1,\dots ,\theta^q] \lra \cA$, which is
determined once we know the image of the generators
$$(x^1,\dots ,x^p,\theta^1,\dots ,\theta^q)
\lra(a^1 ,\dots, a^p,\al^1, \dots, \al^q),$$
with $a^i \in \cA_0$ and $\al^j \in \cA_1$. Notice that the $\C$-points
of $\C^{p|q}$ are given by $(k_1 \dots k_p, 0 \dots 0)$ and coincide
with the points of the affine space $\C^p$. In this example it is
clear that the knowledge of the points over a field is by no means
sufficient to describe the supergeometric object.
$ $
\end{example}

\begin{remark}
It is important at this point to notice that just giving a
functor from $\salg$ to $\sets$, does not guarantee that it is the
functor of points of a supervariety. A set of conditions to
establish this is given in \cite{cf} ch. 10.
\end{remark}

The functor of points for projective supervarieties is more
complicated and we are unable to give a complete discussion here.
it would be too long to give a general discussion here.
We shall neverthless discuss the functor of points of the
projective space and superspace.

\begin{example} \label{projectivespace}
Let us consider the functor:
$h: \alg \lra \sets$, where $h(\cA)$ are the projective $\cA$-modules of
rank one in $\cA^n$.

Equivalently $h(\cA)$ consists
of the pairs $ (L,\phi)$, where $L$ is a projective
$\cA$-module of rank one, and $\phi$ is a surjective morphisms $\phi: \cA^{n+1} \lra L$. These pairs are taken modulo the equivalence relation
$$
(L,\phi)\approx(L',\phi')\qquad \Leftrightarrow\qquad
L \stackrel{a} \approx L',\qquad \phi'=a\circ\phi,
$$

 If $\cA=\C$, then projective modules are free and a morphism
$$\phi :\C^{n+1}\rightarrow \C$$
is specified by a n-tuple, $(a^1,\dots a^{n+1})$, with $a^i\in \C$,
not all of the $a^i=0$. The equivalence relation becomes
$$(a^1,\dots ,a^{n+1})\sim(b^1,\dots b^{n+1}) \quad \Leftrightarrow\quad (a^1,\dots, a^{n+1})=\lambda(b^1,\dots, b^{n+1}),$$
with $\lambda\in \C^\times$ understood as an automorphism of $\C$. It is clear
then that
$h(\C)$ consists of all the lines through the origin
in the vector space $\C^{n+1}$, thus
recovering the usual definition of complex projective space.

If $\cA$ is local, projective modules are free over local
rings. We then have a situation similar to the field setting:
equivalence classes are lines in the $\cA$-module $\cA^{n+1}$.

Using the Representability Theorem (see \cite{cf}) one can show that the
functor $h$ is the functor
of points of a variety that we call the projective space and whose
geometric points  coincide with the projective
space $\bP^n$ over the field $k$ as we usually understand it. $\hfill\Box$

\medskip

This example can be easily generalized to the supercontext:
we consider the functor $h_{\bP^{m|n}}:\salg \lra \sets$, where
$h_{\bP^{m|n}}(\cA)$ is defined as the set
the projective $\cA$-modules of rank one in $\cA^{m|n}:=
\cA \otimes \C^{m|n}$.
This is  the functor of points of
the {\sl projective superspace} described in Example \ref{projsup}.

\end{example}

The next question that  we want to tackle is how we can define an
embedding of a (super)variety into the projective (super)space
using the functor of points notation.

Let $X$ be a projective supervariety and
$\Phi:X \lra \bP^{m|n}$ be an injective morphism. As we discussed in
Example \ref{projsup} this embedding is encoded by a surjective morphism:
$$
\C[x_1, \dots, x_m,\xi_1 \dots, \xi_n]
\lra \C[x_1, \dots x_m,\xi_1 \dots,\xi_n]/(f_1, \dots, f_r)
$$
In the notation of
the functor of points, $\Phi$ is a {\it natural transformation}
between the two functors $h_X$ and $h_{\bP^{m|n}}$, given by
$$
\Phi_{\cA}: h_X(\cA) \lra h_{\bP^{m|n}}(\cA)
$$
with $\Phi_{\cA}$ injective.

If $\cA$ is a local
superalgebra, then an $\cA$-point $(a_1 \dots, a_m,\al_1 \dots ,\al_n)
\in h_{\bP^{m|n}(\cA)}$ is in $\phi_\cA(h_X(\cA))$
if and only if it satisfies the homogeneous polynomial relations
$$
\begin{array}{c}
f_1(a_1 \dots a_m,\al_1 \dots, \al_n)=0, \\
\vdots \\
f_r(a_1 \dots a_m,\al_1 \dots ,\al_n)=0.
\end{array}
$$
(See \cite{cfl} for more details).

In summary, to determine the coordinate superalgebra of a
projective supervariety with
respect to a certain projective embedding, we need
to check the relations satisfied by the coordinates {\sl just on local
superalgebras}. This will be our starting point
when we shall determine the coordinate superalgebra of the Grassmannian
supervariety with respect to its Pl\"{u}cker embedding.

\section{The chiral conformal superspace}
\label{conformal}

We are interested in  the super Grassmannian of $(2|0)$-planes inside
the superspace $\C^{4|1}$, that we denote with $\Gr$. This will be
our chiral conformal superspace once we establish an action of the
conformal supergroup on it.

\medskip

$\Gr$ is defined via its functor of points. For a generic superalgebra $\cA$,
the $\cA$-points of $\Gr$ consist of the projective modules of rank
$2|0$ in $\cA^{4|1}:=\cA \otimes \C^{4|1}$. It is not immediately clear that
this is the functor of points of a supervariety, however a fully
detailed proof of this fact is available in \cite{cfl}, Appendix A.
Another important issue is the fact that once a supervariety is given,
its functor of points is completely determined just by looking at
the {\sl local} superalgebras, and similarly the natural
transformations are determined if we know them for local superalgebras.
This a well known fact that
can be found for example in Ref. \cite{fg}, Appendix A.

\medskip

On a local superalgebra $\cA $,  $h_{\Gr}(\cA )$ consists of free submodules
of rank $2|0$ in $\cA ^{4|1}$ (on local superalgebras, projective
modules are free).
One such module can be specified by a
couple of independent even vectors, $a$ and $b$, which in the canonical basis
$\{e_1,e_2,e_3,e_4,\ep_5\}$ are given by two column vectors that span the subspace
\be\pi=\langle a,b\rangle=
\left\langle\begin{pmatrix} a_1 \\ a_2\\ a_3\\ a_4 \\
\alpha_5 \end{pmatrix}, \begin{pmatrix} b_1 \\ b_2\\ b_3\\ b_4 \\
\beta_5 \end{pmatrix}\right\rangle,\label{fopGr}\ee
with $a_i, b_i\in \cA _0$ and $\alpha_5,\beta_5\in \cA _1$. Let
\be h_{\rGL(4|1)}(\cA )=
\left\{\begin{pmatrix}c_{11}&c_{12}&c_{13}&c_{14}&\rho_{15}\\
c_{21}&c_{22}&c_{23}&c_{24}&\rho_{25}\\
c_{31}&c_{32}&c_{33}&c_{34}&\rho_{35}\\
c_{41}&c_{42}&c_{43}&c_{44}&\rho_{45}\\
\delta_{51}&\delta_{52}&\delta_{53}&\delta_{54}&d_{55}
\end{pmatrix}\right\},\label{fopSL}\ee
define the functor of points of the supergroup $\rGL(4|1)$,
where $c_{ij}, d_{55}\in \cA _0$ and $\rho_{i5},\delta_{5i}\in \cA _1$.
We can describe the action of the supergroup  $\rGL(4|1)$ over $\Gr$
as a natural transformation of the functors (for $\cA$ local),
$$
\begin{array}{ccc}
h_{\rGL(4|1)}(\cA )\times h_{\Gr}(\cA ) & \longrightarrow & h_{\Gr}(\cA )  \\
g, \langle a, \, b \rangle & \longmapsto & \langle g \cdot a, \, g \cdot
b \rangle .
\end{array}
$$

\medskip

Let $\pi_0=\langle e_1,e_2\rangle \in h_\Gr(A)$. The stabilizer
of this point in  $\rGL(4|1)$
is the upper parabolic super subgroup $P_u$, whose
functor of points is
$$
h_{P_u}(\cA )=
\left\{\begin{pmatrix}c_{11}&c_{12}&c_{13}&c_{14}&\rho_{15}\\
c_{21}&c_{22}&c_{23}&c_{24}&\rho_{25}\\
0&0&c_{33}&c_{34}&\rho_{35}\\0&0&c_{43}&c_{44}&\rho_{45}\\
0&0&\delta_{53}&\delta_{54}&d_{55}\end{pmatrix}\right\}
\subset h_{\rGL(4|1)}(A).
$$
Then, the Grassmannian is identified with the quotient
$$h_{\Gr}(\cA )=h_{\rGL(4|1)}(\cA )/h_{P_u}(\cA ).
$$

\medskip

We want now to work out the expression for the
\textit{Pl\"{u}cker embedding},
It is important to stress that, contrary
to what happens in the classical setting, in the super context
we have that a generic Grassmannian supervariety does not
admit a projective embedding. However for this particular
Grassmannian such embedding exists, as we are going to show presently.

\medskip

We want to give a natural transformation among the functors
$$
p:h_{\Gr}
\rightarrow h_{\bP(E)},
$$
where $E$ is the super vector space $E=\wedge^2\C^{4|1}\approx \C^{7|4}$.
Given the canonical basis for $\C^{4|1}$ we construct a basis for $E$
\begin{align}&e_1\wedge e_2, e_1\wedge e_3, e_1\wedge e_4, e_2\wedge e_3,
e_2\wedge e_4, e_3\wedge e_4, \ep_5\wedge \ep_5,\qquad &\hbox{(even)}\nonumber
\\ &e_1\wedge\ep_5, e_2\wedge\ep_5,e_3\wedge \ep_5,e_4\wedge\ep_5,
\qquad &\hbox{(odd)}\label{canonical basis}
\end{align}
As in the super vector space case, if $L$ is a $\cA$-module,
for $\cA \in \salg$,
we can construct $\wedge^2L $
$$
\Lambda^2 L= L\otimes L/
\langle u \otimes v +(-1)^{|u||v|}v \otimes u\rangle, \qquad u,v\in L.
$$
If $L \in h_{\Gr}(\cA )$, then  $\wedge^2L \subset \wedge^2\cA ^{4|1}$.
It is clear that if $L$ is a projective $\cA $-module of rank $2|0$, then
$\wedge^2L$ is a projective
$\cA $-module of rank
$1|0$. In other words it is an element of $h_{\bP(E)}(\cA )$, for
$E=\wedge^2 \C^{4|1}$. Hence we have defined a natural
transformation:
$$
\begin{CD}
h_{\Gr}(\cA ) @>p>> h_{\bP(E)}(\cA ) \\
L @>>> \wedge^2L.
\end{CD}
$$
Once we have the natural transformation defined, we can again restrict
ourselves to work only on local algebras.

\medskip

Let $a,b$ be two even independent vectors in $\cA^{4|1}$. For any
superalgebra $\cA$, they generate a free submodule of $\cA^{4|1}$ of
rank $2|0$. The natural transformation  described above
is as follows.
$$
\begin{CD}
 h_{\Gr}(A) @>p_{\cA}>>  h_{\bP(E)}(\cA) \\
 \langle a,b\rangle_{\cA} @>>>  \langle a\wedge b\rangle.
\end{CD}
$$
The map $p_{\cA}$ is clearly injective. The image $p_{\cA}(h_{\Gr}(\cA))$
is the subset of even elements in $h_{\bP(E)}(\cA)$ decomposable in
terms of two even vectors of $\cA^{4|1}$. We are going to find the necessary
and sufficient conditions for an  even  element $Q\in
h_{\bP(E)}(\cA)$
to be decomposable. Let
\begin{eqnarray}
&& Q=q+\lambda \wedge \ep_5+a_{55}
\ep_5 \wedge \ep_5 ,\quad
\hbox{with}\nonumber\\
&&q=q_{12}e_1\wedge e_2+\cdots +q_{34}e_3\wedge e_4,\quad
q_{ij}\in \cA_0, \nonumber\\
 &&\lambda=\lambda_1e_1+\cdots +\lambda_4 e_4,\quad \lambda_i\in
 \cA_1.\label{coordinates}
\end{eqnarray}
$Q$ is decomposable if and only if
\begin{eqnarray*}
&&
Q=(r+\xi\ep_5) \wedge (s+\theta\ep_5)\quad \hbox{with}\\
&& r=r_1e_1+\cdots r_4e_4,\quad s=s_1e_1+\cdots s_4e_4,\quad r_i,
s_i\in \cA_0\quad\xi, \theta\in \cA_1,
\end{eqnarray*} which means
$$
Q=r \wedge s+(\theta r-\xi
 s)\wedge \ep_5+ \xi \theta\ep_5 \wedge \ep_5\;
 \hbox{equivalent to}\;
 q=r \wedge s, \quad \lambda=\theta r-\xi s, \quad
 a_{55}=\xi\theta.
$$
These  are equivalent to the following:
$$
q \wedge q=0, \qquad q \wedge \lambda=0, \qquad
\lambda\wedge\lambda=2a_{55}q \qquad \lambda a_{55}=0.
$$
Plugging (\ref{coordinates})  we obtain
\begin{align}
&q_{12}q_{34}-q_{13}q_{24}+q_{14}q_{23}=0, &&
\hbox{ (classical Pl\"{u}cker relation)} \nonumber  \\&q_{ij}\lambda_k-q_{ik}\lambda_j+q_{jk}\lambda_i=0,&& 1\leq i<j<k\leq 4\nonumber \\
& \lambda_i \lambda_j=a_{55}q_{ij}&& 1\leq i<j\leq 4 \nonumber\\
& \lambda_ia_{55}=0.\label{superplucker} &&
\end{align}
These are the {\sl super Pl\"{u}cker relations}.
As we shall see in the next section the superalgebra
\be
\cO({\Gr})=k[q_{ij}, \lam_k, a_{55}]/\cI_P, \label{sgr}
\ee
is associated to the supervariety $\Gr$ in the Pl\"{u}cker embedding
described above, where
$\cI_P$ denotes the ideal of the super Pl\"{u}cker relations
(\ref{superplucker}). In other words $\cI_P$ contains all the
relations involving the coordinates $q_{ij}$, $\lambda_k$ and $a_{55}$.

\begin{remark}
The superalgebra $\cO(\Gr)$ is a sub superalgebra (though not a Hopf sub
superalgebra) of $\cO(\rGL(4|1))$. It is in fact the superalgebra
generated by the corresponding minors, and the Pl\"{u}cker
relations are all the relations satisfied by these minors
in $\cO(\rGL(4|1))$.
\end{remark}

\section{The super Grassmannian via invariant theory}
\label{invar}

In this section we propose an alternative and equivalent
way to construct the super Grassmannian $\Gr$ as a complex
supervariety and we give the coordinate superring associated to the
super Grassmannian in the Pl\"{u}cker embedding, thus completing
the discussion initiated in the previous section.

\medskip
As we have seen in Section \ref{supergeo}, the super Grassmannian
can be equivalently understood as a
a pair consisting of the underlying topological
space $G(2,4)$,  and a sheaf of superalgebras
conveniently chosen that we shall describe presently.

We recall first what happens in the ordinary case.
Let the set $S$ be
$$S=\{(v,w)\in \C^4\oplus \C^4\;/\; \rank(v,w)=2\},$$ and consider  the
equivalence relation
$$
(v,w)\sim(v',w')\quad \Leftrightarrow \quad
\rspan\{v,w\}=\rspan\{v',w'\},
$$
or equivalently
$$
(v,w)\sim(v',w')\quad \Leftrightarrow \quad \exists \, g\in
\rGL(2,\C)\; \hbox{such that} \;(v',w')=(v,w)g.
$$
Then we have that $G(2,4)=S/\sim$.

We consider now the set of  polynomials on $S$, $\rPol(S)$, and the subset of such polynomials that  is semi-invariant under the transformation of $\rGL(2,\C)$, that is
$$f(v', w')=f(u,v)\lambda(g),\qquad \lambda(g) \in \C, \quad f\in \rPol(S).$$  This defines the homogeneous ring of  $G(2,4)$, which  is generated by the  six determinants \cite{fu}.
$$y_{ij}=v_iw_j-v_jw_i,\quad \hbox{with } i<j\; \hbox{ and }\; \lambda =\det g.$$ These are not all independent, they satisfy the Pl\"{u}cker relation
$$y_{12} y_{34}+y_{23} y_{14}+y_{31}
y_{24}=0.$$

Let $\cH$ be the sheaf of polynomials on $S$,
so for each open set in $\tilde U\subset S$,
$\cH(\tilde U)=\rPol(\tilde U)$ and
$\cH^{\mathrm{inv}}$  the  subsheaf of $\cH$ corresponding
to the semi-invariant polynomials.

Let $\pi:S\rightarrow G(2,4)$ be the natural projection. It is clear that for $U\subset_{\mathrm{open}}G(2,4)$, then $\tilde U=\pi^{-1}( U)\subset S$ is also open
 in $S$. We can define the following sheaf over $G(2,4)$:
$$\cO(U)=\cH^{\mathrm{inv}}(\pi^{-1}( U)).$$
This is the structural sheaf of the projective variety $G(2,4)$
with respect to the Pl\"{u}cker embedding.

 \medskip

Now we turn to the super setting and we want to define
the sheaf of superalgebras generalizing the non super
construction to the super Grassmannian.  We define the  superalgebra
$$
\cF(S):=\rPol(S) \otimes\Lambda[\xi_1,\xi_2].$$
Let $(v,w)\in S$ and  consider the $(5 \times 2)$ matrix
$$
\begin{pmatrix}v&w\\\xi_1&\xi_2\end{pmatrix}=
\begin{pmatrix}v_1&w_1\\\vdots&\vdots\\v_4&w_4\\\xi_1&\xi_2\end{pmatrix}.
$$
The group  $\rGL(2,\C)$  acts on
the right on these matrices
$$
\begin{pmatrix}v'&w'\\\xi'_1&\xi'_2\end{pmatrix}=
\begin{pmatrix}v&w\\\xi_1&\xi_2\end{pmatrix}\cdot
g,\qquad g\in\rGL(2,\C).$$
We will write an element $f(v,w,\xi)\in \cF(S)$ as
$$
f(v,w,\xi)=\sum_{i,j=0,1}f_{ij}(v,w)\xi_1^i\xi_2^j.
$$
We will refer to the elements of $\cF(S)$ as `functions', being this customary in the physics literature. We now consider the set of semi-invariant functions
$$f(v',w',\xi')=f(v,w,\xi)\lambda(g), \qquad \lambda(g) \in \C, \quad f\in \cF(S).$$
The following
functions are semi-invariant:
\begin{equation}y_{ij}=v_iw_j-v_jw_i, \quad
\theta_i=v_i\xi_2-w_i\xi_1, \quad
a=\xi_1\xi_2,\label{superinvariants}\end{equation}  with $\lambda(g)=\det g$ but they are
not all independent. They satisfy the {\it super Pl\"{u}cker relations}
(\ref{superplucker})
\begin{align*} &y_{12}y_{34}-y_{13}y_{24}+y_{14}y_{23}=0, \qquad
&&\hbox{ (standard Pl\"{u}cker relation)} \\
&y_{ij}\theta_k-y_{ik}\theta_j+y_{jk}\theta_i=0&&1\leq i<j<k\leq 4\\& \theta_i\theta_j=ay_{ij}&&1\leq i<j\leq 4\\&\theta_ia=0&&1\leq i\leq 4=0.
\end{align*}

We want to show that the elements in (\ref{superinvariants}) generate
the ring of semi-invariants and that (\ref{superplucker}) are all
the relations among these generators.
\begin{proposition}
Let $f$ be a homogeneous semi-invariant function, so
$$f(v',w', \xi')=f(v,w, \xi)\lambda(g)$$  with
$$\begin{pmatrix}v'&w'\\\xi'_1&\xi'_2\end{pmatrix}=
\begin{pmatrix}v&w\\\xi_1&\xi_2\end{pmatrix}\cdot
g,\qquad g\in\rGL(2,\C).$$
Then in the decomposition
\be f(v,w, \xi)=f_0(v,w)+\sum_if_i(v,w)\xi_i+f_{12}(v,w)\xi_1\xi_2,\label{decomposition}\ee
one has that  $f_0(v,w)$ and $f_{12}(v,w)$ are standard (non-super) semi-invariants  and
$$\sum_if_i(v,w)\xi_i=\sum_ih_i(v,w)\theta_i,$$
with $h_i(v,w)$ also a standard semi-invariant.
\end{proposition}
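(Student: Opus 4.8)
The plan is to exploit that the right $\rGL(2,\C)$-action is compatible with the $\Z_2$-grading: the odd generators transform into \emph{linear} combinations of themselves, so the action preserves the filtration by $\xi$-degree. Writing $g=(g_{kl})$, one has $\xi_1'=g_{11}\xi_1+g_{21}\xi_2$, $\xi_2'=g_{12}\xi_1+g_{22}\xi_2$, together with $v'=g_{11}v+g_{21}w$, $w'=g_{12}v+g_{22}w$, and, using $\xi_i^2=0$ and anticommutativity, the single crucial computation $\xi_1'\xi_2'=(\det g)\,\xi_1\xi_2$. First I would substitute these laws into $f(v',w',\xi')$, expand, and collect the four monomials $1,\xi_1,\xi_2,\xi_1\xi_2$. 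Since these form a free basis of $\cF(S)$ over $\rPol(S)$, the semi-invariance identity $f(v',w',\xi')=\lambda(g)f(v,w,\xi)$ splits into one equation per $\xi$-degree, and these are analysed separately. (Recall the characters of $\rGL(2,\C)$ are the powers $\lambda(g)=(\det g)^{k}$.)

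The degree-$0$ and degree-$2$ equations are immediate. Matching the constant term gives $f_0(v',w')=\lambda(g)f_0(v,w)$, so $f_0$ is an ordinary semi-invariant of character $\lambda$. Matching the coefficient of $\xi_1\xi_2$, and using $\xi_1'\xi_2'=(\det g)\xi_1\xi_2$, gives $(\det g)\,f_{12}(v',w')=\lambda(g)f_{12}(v,w)$, so $f_{12}$ is an ordinary semi-invariant of character $\lambda\cdot\det^{-1}$. This disposes of the two extreme pieces.

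The degree-$1$ part is the crux. Collecting the coefficients of $\xi_1$ and $\xi_2$ and writing $F=(f_1,f_2)^{\mathrm t}$, the matching condition reads $g\,F(v',w')=\lambda(g)\,F(v,w)$; that is, $F$ is a semi-invariant \emph{covariant} valued in the standard representation of $\rGL(2,\C)$. The elementary point is that the columns $c_i:=(-w_i,v_i)^{\mathrm t}$ are themselves such covariants, obeying $g\,c_i(v',w')=(\det g)\,c_i(v,w)$, and that the associated $\xi$-linear functions are exactly $\theta_i=v_i\xi_2-w_i\xi_1$, with $\theta_i'=(\det g)\,\theta_i$. Writing $\sum_i f_i\xi_i=\sum_i h_i\theta_i$ is therefore equivalent to expanding $F=\sum_i h_i\,c_i$; comparing $\det$-weights, the coefficients $h_i$ automatically have character $\lambda\cdot\det^{-1}$, i.e.\ they are ordinary semi-invariants. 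Thus the whole proposition reduces to the first fundamental theorem for covariants of $\rGL(2,\C)$ acting on a system of vectors in $\C^2$: the module of standard-valued covariants is generated over the ring of semi-invariants by the coordinate vectors $c_i$.

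The main obstacle is precisely this generation statement, and to keep the argument self-contained I would prove it representation-theoretically. First note that $S$ is the complement in $\C^{4}\oplus\C^{4}$ of the rank-$\le 1$ locus, which has codimension $3$; hence every regular function on $S$ extends and $\rPol(S)=\C[v_i,w_i]$ is the full polynomial ring. The $(\rGL(4),\rGL(2))$ Cauchy (Howe) duality then gives $\rPol(S)\cong\bigoplus_\mu \mathbb{S}_\mu(\C^4)\otimes\mathbb{S}_\mu(\C^2)$, summed over partitions with at most two rows. A $\xi$-linear semi-invariant of character $\det^{k}$ requires $\mathbb{S}_\mu(\C^2)\otimes\C^2\supseteq\det^{k}$, which by Pieri's rule holds only for $\mu=(k,k-1)$; so the space of such semi-invariants is the irreducible $\rGL(4)$-module of highest weight $(k,k-1)$, with multiplicity one. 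On the other hand the multiplication map from $(\text{semi-invariants of character }\det^{k-1})\otimes\langle\theta_1,\dots,\theta_4\rangle$ lands, again by Pieri applied to $\mathbb{S}_{(k-1,k-1)}(\C^4)\otimes\C^4$, in a space containing $(k,k-1)$ with multiplicity one. By Schur's lemma this equivariant map is either zero or an isomorphism onto that copy, and nonvanishing is checked on a single explicit monomial. Surjectivity onto the space of $\xi$-linear semi-invariants then yields $\sum_i f_i\xi_i=\sum_i h_i\theta_i$ with semi-invariant $h_i$, completing the proof.
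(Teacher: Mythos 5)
Your proof is correct, and its skeleton --- splitting $f$ by $\xi$-degree, using that the right $\rGL(2,\C)$-action preserves this grading, computing $\xi_1'\xi_2'=(\det g)\,\xi_1\xi_2$, and identifying $f_0$ and $f_{12}$ as ordinary semi-invariants of characters $\lambda$ and $\lambda\cdot\det^{-1}$ --- coincides exactly with the paper's. Where you genuinely diverge is the $\xi$-linear part, which is also where the paper is thinnest: there the paper merely observes that the oddness of $\xi_1,\xi_2$ plays no role for the action of the ordinary group $\rGL(2,\C)$, treats $(\xi_1,\xi_2)$ as a fifth ordinary row, and invokes ``the same argument as in the ordinary case'' (i.e.\ the classical first fundamental theorem, which in the cited reference is proved by standard-monomial/straightening techniques) to conclude that the $\xi$-linear semi-invariants are exactly the combinations $\sum_i h_i\theta_i$. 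You instead recast this step as the statement that the module of vector-valued covariants is generated over the semi-invariants by the columns $c_i=(-w_i,v_i)^{\mathrm t}$, and then prove that generation statement from scratch: first the codimension-$3$ remark identifying $\rPol(S)$ with the full polynomial ring (a point the paper glosses over, but needed if one reads $\rPol(S)$ as regular functions on the quasi-affine variety $S$), then Cauchy--Howe duality plus Pieri to show the $\det^k$-isotypic part of the $\xi$-linear functions is the single irreducible $\rGL(4)$-module $\mathbb{S}_{(k,k-1)}$ occurring with multiplicity one, and finally Schur's lemma plus one explicit nonzero product (e.g.\ $y_{12}^{k-1}\theta_1\neq 0$) to get surjectivity of the multiplication map. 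What the paper's route buys is brevity and uniformity with the straightening arguments it reuses for the presentation of $\cO(\Gr)$; what yours buys is self-containedness (no appeal to the classical covariant FFT) and a sharper output, namely the exact determination of the space of $\xi$-linear semi-invariants of each weight rather than a bare generation statement. The usual looseness about duals ($\C^2$ versus its contragredient, $\C^4$ versus $(\C^4)^*$) affects only labels, not multiplicities, so your bookkeeping is sound.
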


\begin{proof} Let us take
$$g=\begin{pmatrix}a&b\\c&d\end{pmatrix}, \quad \hbox{so}\quad \begin{pmatrix}v'&w'\\\xi'_1&\xi'_2\end{pmatrix}=\begin{pmatrix}va+wc&vb+wd\\\xi_1a+\xi_2c&\xi_1b+\xi_2d\end{pmatrix}.$$
Then we can see immediately that each term in
(\ref{decomposition}) has to be a semi-invariant, so
\begin{align*}&f_0(v',w')=\lambda(g)f_0(v,w),\qquad \sum_if_i(v',w')\xi'_i=\lambda(g)\sum_if_i(v,w)\xi_i,\\ &f_{12}(v',w')\xi'_1\xi'_2=f_{12}(v,w)\xi_1\xi_2.\end{align*}
We have that $f_0$ is an ordinary semi-invariant transforming with $\lambda(g)$, and since $\xi_1'\xi_2'=\xi_1\xi_2\det g$,  $f_{12}(v,w)$ is a
ordinary semi-invariant transforming with $\lambda(g)\det g^{-1}$.
   The odd terms $\theta^i$
are of the same form as the ordinary invariants $y_{ij}$, since
the fact that $\xi_i$ is odd plays no particular role here
(recall that we are considering the action of an ordinary
group, namely $\rGL(2,\C)$).
So by the same argument we have in the ordinary case,
there are no other odd invariants, besides those
we have already found, that are linear in the odd variable $\xi_1$ and $\xi_2$.
Then
$$
\sum_if_i\xi_i=\sum_ih(v,w)_i\theta_i,
$$ where $h(v,w)_i$ transforms with
$\lambda(g)\det g^{-1}$.
\end{proof}

We now wish to give a result that describes completely
the relations among the invariants.

Consider the polynomial superalgebra $\C[a_{ib}]$,
$1 \leq i \leq 5$, $1 \leq b \leq 2$, with their parity defined as
$$p(a_{ij})=p(i)+p(j),\quad \hbox{with}\quad p(k)=0\hbox{ if }
0 \leq k \leq 4\hbox{ and } p(5)=1.$$

On $\C[a_{ij}]$ there exists the following action of $\rGL(2,\C)$ :
$$
\begin{CD}
 \C[a_{ib}]\times \rGL(2,\C) @>>> \C[a_{ib}]\\
( a_{ia}, g^{-1}) @>>> \sum_k a_{ib}g^{-1}_{ba}
\end{CD}
$$
We have just proven that the semi-invariants are generated by the polynomials
\begin{align*}
&d_{ij}=a_{i1}a_{j2}-a_{i2}a_{j1}, \quad  1\leq i<j \leq 5, \\
&d_{55}=a_{51}a_{52}.
\end{align*}

We have the following proposition:

\begin{proposition} \label{presentationR}
Let $\cO(\Gr)$ be the subring  of $\C[a_{ib}]$ generated by the
determinants $d_{ij}=a_{i1}a_{j2}-a_{j1}a_{i2}$ and
$d_{55}=a_{51}a_{52}$.
Then $\cO(\Gr) \cong \C[a_{ib}]/I_P$, where $I_P$ is the ideal of the super Pl\"{u}cker relations (\ref{superplucker}). In other words $I_P$ contains all
the possible relations satisfied by $d_{ij}$ and $d_{55}$.
\end{proposition}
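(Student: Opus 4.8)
The plan is to show that the surjection of superalgebras
$$\phi:\C[q_{ij},\lambda_k,a_{55}]\lra \cO(\Gr)\subset\cF(S),\qquad q_{ij}\mapsto d_{ij},\ \lambda_k\mapsto d_{k5}=\theta_k,\ a_{55}\mapsto d_{55},$$
has kernel exactly $\cI_P$. The inclusion $\cI_P\subseteq\ker\phi$ is just the set of relations already verified in Section~\ref{invar}, so the real content is $\ker\phi\subseteq\cI_P$: every relation among the invariants must follow from the super Pl\"ucker relations. The engine is the $\Z$-grading of $\cF(S)=\rPol(S)\otimes\Lambda[\xi_1,\xi_2]$ by degree in the two odd generators $\xi_1,\xi_2$. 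Writing $R=\rPol(S)$ and letting $A\subset R$ be the classical subring generated by the minors $d_{ij}$ with $i<j\leq 4$, the classical Pl\"ucker theorem (\cite{fu}) gives $A\cong\C[q_{ij}]/(q_{12}q_{34}-q_{13}q_{24}+q_{14}q_{23})$ together with an embedding $A\hookrightarrow R$; this will control the even part.

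First I would bring an arbitrary element of $\C[q_{ij},\lambda_k,a_{55}]$ to a normal form modulo $\cI_P$. Using $\lambda_i\lambda_j\equiv a_{55}q_{ij}$ every monomial of $\lambda$-degree $\geq 2$ is lowered, and combining this with $a_{55}\lambda_k\equiv 0$ kills all $\lambda$-degree $\geq 3$ terms as well as every cross term $a_{55}\lambda_k$. A point that must be handled with care here is the purely even coordinate $a_{55}$: since $\phi(a_{55}^2)=d_{55}^2=(\xi_1\xi_2)^2=0$, the nilpotency $a_{55}^2=0$ lies in $\ker\phi$, yet it is \emph{not} a formal consequence of the four displayed relations (evaluating $q_{ij},\lambda_k\mapsto 0$ and $a_{55}\mapsto t$ gives a homomorphism killing those relations but not $a_{55}^2$). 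I would therefore adjoin $a_{55}^2=0$ to $\cI_P$ explicitly, noting that it holds automatically in $\cO(\Gr)$ because $d_{55}^2=0$, and use it in the reduction. After these steps every class is represented by a normal form
$$N=A_0(q)+A_1(q)\,a_{55}+\sum_{k=1}^4 B_k(q)\,\lambda_k,\qquad A_0,A_1,B_k\in\C[q_{ij}],$$
so it suffices to prove that $\phi$ is injective on normal forms.

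Applying $\phi$ and separating by $\xi$-degree, the equation $\phi(N)=0$ splits into three independent conditions: $A_0(d)=0$ and $A_1(d)=0$ in $R$, and $\sum_k B_k(d)\,\theta_k=0$. The first two are dispatched at once by the classical Pl\"ucker theorem, since the embedding $A\hookrightarrow R$ forces $A_0$ and $A_1$ to lie in the classical Pl\"ucker ideal, hence in $\cI_P$. Expanding $\theta_k=v_k\xi_2-w_k\xi_1$, the third condition is equivalent to the pair of syzygies $\sum_k B_k(d)v_k=0$ and $\sum_k B_k(d)w_k=0$ in $R$, that is, $(B_k)$ lies in the kernel of the $2\times 4$ matrix of the spanning vectors. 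I expect this syzygy statement to be the main obstacle, and it is exactly the point where the odd structure forces new relations: I must show this kernel is generated as an $A$-module by the four Pl\"ucker syzygies coming from $q_{ij}\lambda_k-q_{ik}\lambda_j+q_{jk}\lambda_i=0$ (the previous Proposition already guarantees, on the generation side, that the $\xi$-degree-one part of every invariant is an $A$-combination of the $\theta_k$, so only these relations remain to be accounted for).

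For the syzygy step the plan is to localize. On the open set $d_{12}\neq 0$ the two syzygies attached to the triples $\{1,2,3\}$ and $\{1,2,4\}$, whose entries in positions $3$ and $4$ are $d_{12}$, let me clear $B_3$ and $B_4$; the invertibility of $\bigl(\begin{smallmatrix}v_1&v_2\\ w_1&w_2\end{smallmatrix}\bigr)$ then forces the remaining components to vanish, so $(B_k)$ is an $A[d_{12}^{-1}]$-combination of the Pl\"ucker syzygies, and symmetrically over each $A[d_{ij}^{-1}]$. The hard part is the passage from these local statements to genuine $A$-module generation: I would cover the relevant locus by the opens $\{d_{ij}\neq 0\}$ and argue, using the normality of $A$ (projective normality of the Grassmannian) together with a depth/saturation argument, that the syzygy module is saturated, so that the locally valid expressions patch to an honest $A$-linear combination. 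A more structural alternative would replace this bookkeeping by the $\rGL(2,\C)$-equivariant decomposition of $R$ via the Cauchy formula, identifying $A$ with the rectangular pieces and reading off the syzygies from the first and second fundamental theorems of invariant theory; I would keep the explicit localization as the primary line, since it matches the elementary style of Section~\ref{invar}, and defer the full saturation argument to \cite{cfl}.
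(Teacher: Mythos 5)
Your proposal takes a genuinely different route from the paper's proof. The paper argues by standard monomial theory, as in the classical case \cite{fu}: it defines superstandard tableaux, uses the relations (\ref{superplucker}) as straightening laws, and then asserts linear independence of the standard monomials. You instead split along $\xi$-degree and reduce to classical invariant theory plus a syzygy computation. Along the way you make a correct and important observation: $a_{55}^2$ lies in $\ker\phi$ (since $d_{55}^2=(a_{51}a_{52})^2=0$) but \emph{not} in the ideal generated by the four families in (\ref{superplucker}) --- your substitution $q_{ij}\mapsto 0$, $\lam_k\mapsto 0$, $a_{55}\mapsto t$ proves this --- so the proposition as printed needs $a_{55}^2$ adjoined to $\cI_P$. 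Note that the paper's own proof has the same defect: by its definition the two-row tableau with rows $(5,5),(5,5)$ is superstandard, yet the corresponding standard monomial $d_{55}^2$ vanishes, so the claimed linear independence fails unless repetitions of the row $(5,5)$ are excluded, and straightening $a_{55}^2$ to zero then requires exactly your extra relation. This part of your proposal is a genuine improvement, not a detour.

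The gap is where you yourself flag it: the odd sector. Your localization argument shows that, after inverting any $d_{ij}$, the common syzygies of $(v_k)$ and $(w_k)$ with coefficients in $A$ are generated by the Pl\"ucker syzygies; but the passage from these local statements to honest $A$-module generation is precisely the second fundamental theorem you are trying to prove, and ``normality plus a depth/saturation argument, deferred to \cite{cfl}'' is not a proof --- one must rule out extra syzygies supported at the vertex of the affine cone, which is the entire difficulty. The clean repair is the equivariant alternative you mention only in passing, and it should be your primary argument. First, over the full polynomial ring $R=\rPol(S)$ the kernel of $R^4\lra R^2$, $(B_k)\mapsto(\sum_k B_kv_k,\sum_k B_kw_k)$, is generated by the four vectors of the form $(y_{jk},-y_{ik},y_{ij},0)$: this is classical, e.g.\ exactness of the Buchsbaum--Rim complex for the generic $2\times 4$ matrix, whose ideal of maximal minors has the expected codimension $3$. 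Second, given $(B_k)\in A^4$ in this kernel (with each $B_k$ homogeneous of weight $\det^m$), write it as an $R$-combination of those generators and project the coefficients onto the $\rGL(2,\C)$-isotypic component of weight $\det^{m-1}$; since the generators' entries have weight exactly $\det$, only that component contributes, and by the classical first fundamental theorem (already invoked in Section \ref{invar}) it lies in $A$. This yields an $A$-combination directly, with no gluing or saturation; the lift back to $\C[q_{ij},\lam_k,a_{55}]$ modulo $\cI_P$ is then immediate because the kernel of $\C[q_{ij}]\lra A$ is the classical Pl\"ucker ideal, which is contained in $\cI_P$. With the $a_{55}^2$ correction stated explicitly and this step carried out, your argument is complete.
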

\begin{proof}
It is easy to verify that  $d_{ij}$ and $d_{55}$ satisfy
all the above relations, the problem is to prove that these are the only
relations.

The proof of this fact is the same as in the classical setting.
Let us briefly sketch it.
Let $I_1, \dots ,I_r$ be multiindices organized in a tableau.
We say that a  tableau is {\it superstandard} if it is strictly
increasing along rows with the exception of the number 5
(that can be repeated)
and weakly increasing along columns.
A {\it standard monomial} in $\cO(\Gr)$ is
a monomial $d_{I_1}, \cdots, d_{I_r}$ where the indices $I_1, \dots,
I_r$ form a superstandard tableau. Using the
super Pl\"{u}cker relation one can verify that any monomial in $\cO(\Gr)$ can
be written as a linear combination of standard ones. This can be
done directly or using the same argument for the classical case
(see Ref. \cite{fu} pg 110 for more details). The standard monomials
are also linearly independent, hence they form a basis for $\cO(\Gr)$ as
$\C$-vector space. Again this is done with the same argument as in Ref.
\cite{fu} pg 110. So given a relation in $\cO(\Gr)$, once we write each
term as a standard monomial we obtain that either the relation is
identically zero (hence it is a relation in the Pl\"{u}cker ideal) or
it gives a relation among the standard monomials, which gives a
contradiction.
\end{proof}

In the end we summarize the main results of Sections \ref{conformal} and
\ref{invar} with a corollary.

\begin{corollary} \label{superpres}
\begin{enumerate}
\item Let $\Gr$ be the Grassmannian of $2|0$ spaces in $\C^{4|1}$.
Then $\Gr \subset \bP^{7|4}$, that is $\Gr$
is a projective supervariety. Such embedding is encoded by the superring
$\cO(\Gr)$ described above.
\item $\cO(\Gr)$ is isomorphic to the ring generated by the determinants
$d_{ij}$, $d_{55}$.
\end{enumerate}
\end{corollary}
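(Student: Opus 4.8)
The plan is to assemble the two preceding sections into the stated summary; essentially nothing new is required beyond Proposition \ref{presentationR} and the construction of the Pl\"ucker map, so the proof is a matter of matching up the pieces and naming where the real work was done.

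First I would prove part (1). Recall from Section \ref{conformal} that the Pl\"ucker map is an injective natural transformation $p \colon h_{\Gr} \to h_{\bP(E)}$ with $E = \wedge^2 \C^{4|1}$. From the explicit basis (\ref{canonical basis}) one reads off $\dim E = 7|4$ (seven even generators, including $\ep_5 \wedge \ep_5$, and four odd ones), so $\bP(E) = \bP^{7|4}$, giving the inclusion $\Gr \subset \bP^{7|4}$. Since a supervariety and its functor of points are determined by their values on local superalgebras (Section \ref{supergeo}), it suffices to identify the image over a local $\cA$. There projective modules are free, and I showed in Section \ref{conformal} that an even element $Q \in h_{\bP(E)}(\cA)$ lies in $p_{\cA}(h_{\Gr}(\cA))$ precisely when it is decomposable, which is in turn equivalent to the super Pl\"ucker relations (\ref{superplucker}). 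By the functor-of-points criterion for projective embeddings recorded in Section \ref{supergeo}, the image is therefore a projective supervariety whose coordinate superring with respect to this embedding is $\C[q_{ij},\lam_k,a_{55}]/\cI_P = \cO(\Gr)$, exactly as in (\ref{sgr}). This establishes (1).

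For part (2) I would invoke Proposition \ref{presentationR} together with a dictionary between the two descriptions of $\cO(\Gr)$. The semi-invariants (\ref{superinvariants}), equivalently the determinants $d_{ij}$ and $d_{55}$ of Section \ref{invar}, match the Pl\"ucker coordinates under $d_{ij} \leftrightarrow q_{ij}$ for $i,j \le 4$, $d_{k5} \leftrightarrow \lam_k$, and $d_{55} \leftrightarrow a_{55}$; under this correspondence the ideal $I_P$ of Proposition \ref{presentationR} is literally the ideal $\cI_P$ of (\ref{superplucker}). Proposition \ref{presentationR} asserts that the subring of $\C[a_{ib}]$ generated by these determinants is isomorphic to $\C[a_{ib}]/I_P$, and hence to $\cO(\Gr)$; this is precisely the claim of (2).

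The main obstacle is not the bookkeeping above but the two facts on which it rests, both already secured earlier. The first is that this particular Grassmannian admits a projective embedding at all --- the paper stresses that a generic super Grassmannian does not --- which depends on the decomposability conditions being genuinely polynomial and cutting out exactly the image on local superalgebras. The second, and the real crux, is that $\cI_P$ (equivalently $I_P$) is the \emph{complete} ideal of relations among the generators, not merely a subideal: a priori the determinants could satisfy further, non-obvious identities. This completeness is exactly Proposition \ref{presentationR}, whose proof runs through the superstandard-tableau straightening law to show that standard monomials both span $\cO(\Gr)$ and are linearly independent, so that any relation reduces modulo the super Pl\"ucker relations to a trivial one.
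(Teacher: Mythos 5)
Your proposal is correct and follows essentially the same route as the paper: the corollary is stated there as a summary whose content is exactly the assembly you describe, namely the Pl\"ucker natural transformation $p\colon h_{\Gr}\to h_{\bP(E)}$ with $E=\wedge^2\C^{4|1}$ and the decomposability-equals-Pl\"ucker-relations analysis on local superalgebras from Section \ref{conformal}, combined with Proposition \ref{presentationR} (standard monomial/superstandard tableau argument) supplying the completeness of the ideal $\cI_P$ and hence part (2). You also correctly locate the only genuine mathematical content in Proposition \ref{presentationR}, which is precisely where the paper places it.
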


\section{The chiral Minkowski superspace}
\label{smink}

In this section we concentrate our attention to determine
the big cell inside the Grassmannian supervariety that we have
discussed in the previous sections. We shall identify such big cell
with the chiral Minkowski superspace.

\medskip

As in the ordinary setting, the super Grassmannian $\Gr$
admits an open cover in terms of affine superspaces: topologically
the two covers are the same.

\medskip

We want to describe
the functor of points of the {\sl big cell} $U_{12}$ inside $\Gr$.
This is the open affine functor corresponding to
the points in which the coordinate $q_{12}$ is invertible.

\medskip

First of all,
we write an element of $h_{\rGL(4|1)}(\cA)$ in blocks as (see (\ref{fopSL}))
$$
\begin{pmatrix}
C_1&C_2&\rho_1\\C_3&C_4&\rho_2\\\delta_1&\delta_2&d_{55}
\end{pmatrix}.
$$ Assuming that $\det C_1$ is invertible,
we can bring this matrix, with a transformation of $h_{P_u}(\cA)$,
to the form
\be
\begin{pmatrix}
C_1&C_2&\rho_1\\C_3&C_4&\rho_2\\\delta_1&\delta_2&d_{55}
\end{pmatrix}h_{P_u}(\cA)=
\begin{pmatrix}\id_2&0&0\\A&\id_2&0\\\alpha&0&1\end{pmatrix}
h_{P_u}(\cA) \, \in \, h_{\rGL(4|1)}(\cA)\, \big/ \, h_{P_u}(\cA)
\label{standardform}\ee

\medskip

Consider the subspace $\pi=\rspan\{ a, \, b \}$ in $h_\Gr(\cA)$ for
$\cA$ local. Recall that in Sec. \ref{conformal} we
made the identification:
$h_\Gr(\cA) \cong   h_{\rGL(4|1)}(\cA)\, \big/ \, h_{P_u}(\cA)$. Hence:
$$
\pi = \rspan \{ a, \, b \} \approx
\begin{pmatrix}
C_1&C_2&\rho_1\\C_3&C_4&\rho_2\\\delta_1&\delta_2&d_{55}
\end{pmatrix}h_{P_u}(\cA) \, \in \, h_{\rGL(4|1)}(\cA)\, \big/ \, h_{P_u}(\cA)
$$
with $\det C_1$ invertible. Then, by a change of coordinate (\ref{standardform})
we can bring this matrix to the standard form detailed above
$$
\pi \approx
\begin{pmatrix}\id_2&0&0\\A&\id_2&0\\\alpha&0&1\end{pmatrix}
h_{P_u}(\cA),
\qquad A=
\begin{pmatrix}a_{11}&a_{12}\\a_{21}&a_{22}\end{pmatrix},
\qquad \alpha =(\alpha_1,\alpha_2),
$$ with the entries of $A$ in $\cA_0$ and the entries of $\alpha$ in $\cA_1$. Its column vectors generate also the submodule $\langle a, \, b \rangle$.

\medskip

The assumption that $\det C_1$ is invertible is equivalent to assume to be in
the topological open set $|U_{12}|=|\Gr| \cap |V_{12}|$, where $V_{12}$ is the
affine open set corresponding to the topological open set $|V_{12}|$
defined by taking in $\bP(E)$
the coordinate $q_{12}$ to be invertible. Consequently the coordinate
superring of the affine open subvariety
$U_{12}$ of $\Gr$ corresponds to the projective localization of
the Grassmannian superring in the coordinate $q_{12}$. In other
words it consists of the elements of degree zero in
$$
\C[q_{ij}q_{12}^{-1}, \lambda_jq_{12}^{-1}, a_{55}q_{12}^{-1}]
\subset \cO(\Gr)[q_{12}^{-1}].
$$
As one can readily check, there are no relations among these
generators so that
the big cell $U_{12}$ of $\Gr$ is the affine superspace
with coordinate ring 
\be\cO(U_{12})=\C[x_{ij},\xi_j]\approx\C^{4|2}.
\label{bigcellsuperalgebra}\ee
where we set $x_{ij}=q_{ij}q_{12}^{-1}$, $x_{55}=a_{55}q_{12}^{-1}$,
$\xi_j=\lambda_jq_{12}^{-1}$.

\medskip

We are now interested in the super subgroup of
${\rGL(4|1)}$ that preserves the big cell ${U_{12}}$.
This the lower parabolic sub-supergroup $P_l$ (see \cite{cfl}), whose functor
of points is given in suitable coordinates as
type
$$
h_{P_l}(\cA)=\left\{\begin{pmatrix}x&0&0\\tx&y&y\eta\\d \tau &d   \xi& d
\end{pmatrix}\right\} \subset h_{\rGL(4|1)}(\cA)
$$
where $x$ and $y$ are even, invertible $2\times 2$ matrices, $t$ is an even,
arbitrary $2\times 2$ matrix, $\eta$ a $2\times 1$ odd matrix, $\tau, \xi$
are $1\times 2$ odd matrices and  $d$ is an invertible even element.

The action of the supergroup $P_l$ on the big cell $U_{12}$
is as follows,
$$\begin{CD}h_{P_l}(\cA)\times h_{U_{12}}(\cA)@>>>\quad h_{U_{12}}(\cA)\\\\
\left(\begin{pmatrix}x&0&0\\tx&y&y\eta\\d \tau &d \xi& d \end{pmatrix},\begin{pmatrix}\id_2\\A\\\alpha\end{pmatrix}\right)@>>>\begin{pmatrix}\id_2\\A'\\\alpha'\end{pmatrix},
\end{CD}$$
where, using a transformation of $h_{P_u}(\cA)$ to revert the resulting matrix to the standard form (\ref{standardform}), we have
\be\begin{pmatrix}\id_2\\A'\\\alpha'\end{pmatrix}=\begin{pmatrix}\id_2\\y(A+\eta\alpha) x^{-1}+t\\
d (\alpha  +\tau +\xi A) x^{-1}\end{pmatrix}.\label{actionbigcell}\ee
The subgroup with $\xi=0$ is the super Poincar\'{e} group times dilations (compare with Eq. (14) in Ref \cite{flv}). In that case
$$d=\det x\det y.$$

\section{Quantum chiral conformal superspace}
\label{qconformal}

In this section we give a quantum deformation of $\cO(\Gr)$, discussed
in the previous sections. This will yield a quantum deformation of
the chiral conformal superspace together with the natural coaction
of the conformal supergroup on it.

\begin{definition}\label{maninrelations}
Let us define following Manin \cite{ma2} the quantum matrix superalgebra.
$$
M_q(m|n)=_{def}\C_q<a_{ij}>/I_M
$$
where $\C_q<a_{ij}>$ denotes the free algebra over $\C_q=\C[q,q^{-1}]$
generated by the homogeneous variables $a_{ij}$
and the ideal $I_M$ is generated by the relations \cite{ma2}:
$$
\begin{array}{c}
a_{ij}a_{il}=(-1)^{\pi(a_{ij})\pi(a_{il})}
q^{(-1)^{p(i)+1}}a_{il}a_{ij}, \quad j < l \\ \\
a_{ij}a_{kj}=(-1)^{\pi(a_{ij})\pi(a_{kj})}
q^{(-1)^{p(j)+1}}a_{kj}a_{ij}, \quad i < k \\ \\
a_{ij}a_{kl}=(-1)^{\pi(a_{ij})\pi(a_{kl})}a_{kl}a_{ij}, \quad
i< k,j > l \quad or \quad i > k,j < l \\ \\
a_{ij}a_{kl}-(-1)^{\pi(a_{ij})\pi(a_{kl})}a_{kl}a_{ij}= \\ \\
(-1)^{\pi(a_{ij})\pi(a_{kj})}(q^{-1}-q)a_{kj}a_{il} \quad i<k,j<l
\end{array}
$$
where $p(i)=0$ if $1 \leq i \leq m$, $p(i)=1$ otherwise
and $\pi(a_{ij})=p(i)+p(j)$ denotes the parity of $a_{ij}$.
\end{definition}

$M_q(m|n)$ is a bialgebra with the usual comultiplication and
counit:
$$
\Delta(a_{ij})=\sum a_{ik} \otimes a_{kj},
\qquad
\ep(a_{ij})=\de_{ij}.
$$
\medskip

We are ready to define the general linear supergroup which
will be most interesting for us.

\begin{definition}
We define {\it quantum general linear supergroup}
$$
\rGL_q(m|n)=_{def}
M_q(m|n)
\langle{D_{1}}^{-1},{D_{2}}^{-1}\rangle
$$
where ${D_{1}}^{-1}$,
${D_{2}}^{-1}$ are even indeterminates such
that:
$$
\begin{array}{c}
{D_1}D_1^{-1}=1=
{D_1}^{-1}D_{1},\qquad
{D_{2}}
{D_{2}}^{-1}=1=
{D_{2}}^{-1}
{D_{2}}
\end{array}
$$
and
$$
\begin{array}{c}
D_{1}=_{def}\sum_{\s \in S_m}(-q)^{-l(\s)}
a_{1\s(1)} \dots a_{m\s(m)} \\ \\
D_{2}=_{def}\sum_{\s \in S_n}(-q)^{l(\s)}
a_{m+1,m+\s(1)} \dots a_{m+n,m+\s(n)}
\end{array}
$$
are the quantum determinants of the diagonal blocks.
\end{definition}
$\rGL_q(m|n)$ is H hopf algebra, where the comultiplication and
counit are the same as in $M_q(m|n)$,
while the antipode $S$ is detailed in Ref.  \cite{fi5}.

\medskip

We now give the central definition in analogy with
the ordinary setting (compare with Prop. \ref{superpres}).

\begin{definition}
Let the notation be as above. We define {\it quantum super Grassmannian}
of $2|0$ planes in $4|1$ dimensional superspace as the non commutative
superalgebra $Gr_q$ generated by the following quantum super minors in
$\rGL_q(4|1)$:
$$
\begin{array}{c}
D_{ij}= a_{i1}a_{j2}-q^{-1}a_{i2}a_{j1}, \qquad 1\leq i<j \leq 4, \qquad
D_{55}=a_{51}a_{52} \\ \\
D_{i5}= a_{i1}a_{52}-q^{-1}a_{i2}a_{51},
\qquad 1 \leq i \leq 4. \\ \\
\end{array}
$$
\end{definition}

For clarity let us write all the generators:
$$
D_{12}, \quad D_{13}, \quad D_{14}, \quad D_{23}, \quad
D_{24}, \quad D_{34}, \quad D_{55}, \quad D_{15}, \quad D_{25},
\quad D_{35}, \quad D_{45}
$$

Notice that when $q=1$ this is the coordinate ring of the
super Grassmannian.

\medskip

We need to work out the commutation relations and the quantum
Pl\"{u}cker relations in order to be able to give a presentation
of the quantum Grassmannian in terms of generators and
relations.

\medskip

Let us start with the commutation relations. With
 very similar calculations to the ones in Ref. \cite{fi2} one
finds the following relations:

\begin{itemize}

\item
If $i,j,k,l$ are {\sl not}
all distinct we have ($1 \leq i,j,k,l \leq 5$):
$$
D_{ij}D_{kl}=q^{-1}D_{kl}D_{ij}, \qquad (i,j)<(k,l)
$$
where $<$ refers to the lexicographic ordering.

\item
If $i,j,k,l$ are instead all distinct we have:
$$
\begin{array}{c}
D_{ij}D_{kl}=q^{-2}D_{kl}D_{ij}, \qquad 1 \leq i<j<k<l \leq 5 \\ \\
D_{ij}D_{kl}=q^{-2}D_{kl}D_{ij}-(q^{-1}-q)D_{ik}D_{jl},
\qquad 1 \leq i<k<j<l \leq 5 \\ \\
D_{ij}D_{kl}=D_{kl}D_{ij}, \qquad 1 \leq i<k<l<j\leq 5
\end{array}
$$

\item

The only commutation relations that we are left to be shown
are the following:
$$
D_{ij}D_{55}, \qquad D_{i5}D_{j5}, \qquad D_{i5}D_{55}
$$

After some computations one gets:
$$
\begin{array}{c}
D_{ij}D_{55}=q^{-2}D_{55}D_{ij}, \qquad 1 \leq i <j \leq 4\\ \\
D_{i5}D_{j5}=-q^{-1}D_{j5}D_{i5}-(q^{-1}-q)D_{ij}D_{55}
\qquad 1\leq i<j \leq 4\\ \\
D_{i5}D_{55}=D_{55}D_{i5}=0,  \qquad 1 \leq i \leq 4.
\end{array}
$$

\end{itemize}

This concludes the discussion of the commutation relations.
As for the Pl\"{u}cker relations, using the result for the
non super setting (refer
to \cite{fi2}) we have
$$
\begin{array}{c}
D_{12}D_{34}-q^{-1}D_{13}D_{24}+q^{-2}D_{14}D_{23}=0 \\ \\
D_{ij}D_{k5}-q^{-1}D_{ik}D_{j5}+q^{-2}D_{i5}D_{jk}=0,
\qquad 1 \leq i<j<k \leq 4 \\
\end{array}
$$
To this we must add the relations, which can be computed
directly:
$$
D_{i5}D_{j5}=qD_{ij}D_{55}, \qquad 1 \leq i<j \leq 4.
$$

The next proposition summarizes all of our calculations
and the proof can be found in Ref.
\cite{cfl}.
\begin{proposition}${}^{}$

\begin{itemize}
\item The quantum Grassmannian ring is given in terms of
generators and relations as:
$$
Gr_q=\C_q\langle X_{ij}\rangle/I_{Gr}
$$
where $I_{Gr}$ is the two-sided ideal generated by the
commutations and Pl\"ucker relations in the indeterminates
$X_{ij}$. Moreover $Gr_q/(q-1) \cong \cO(\Gr)$
(see Section \ref{conformal}).

\item The quantum Grassmannian ring is the free ring
over $\C_q$ generated by the monomials in the quantum determinants:
$$
D_{i_1j_1}, \dots, D_{i_rj_r}
$$
where $(i_1,j_1), \dots, (i_r,j_r)$ form a semistandard tableau
(for its definition refer to \cite{cfl}).
\end{itemize}

\end{proposition}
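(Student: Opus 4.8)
The plan is to establish the final Proposition in two parts, mirroring the structure of the classical theory of the Grassmannian coordinate ring as developed in Ref.~\cite{fi2} and the standard-monomial combinatorics of Ref.~\cite{fu}. For the first bullet, I would argue that the abstract algebra $\C_q\langle X_{ij}\rangle/I_{Gr}$, with $I_{Gr}$ generated by the commutation and Pl\"ucker relations just computed, surjects onto the concrete subalgebra $Gr_q \subset \rGL_q(4|1)$ generated by the quantum minors $D_{ij}$, $D_{55}$, $D_{i5}$, by sending $X_{ij}\mapsto D_{ij}$. This map is well defined precisely because the $D$'s were shown above to satisfy all the listed relations. The content is that it is an \emph{isomorphism}, i.e. there are no further relations, and that $Gr_q/(q-1)\cong\cO(\Gr)$.

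First I would set up the specialization statement, since it both motivates and constrains the argument. Reducing modulo $(q-1)$ sends each commutation relation to commutativity (up to the appropriate sign coming from parity, recovering supercommutativity) and each quantum Pl\"ucker relation to the classical super Pl\"ucker relation~(\ref{superplucker}); since $Gr_q/(q-1)$ is then generated by the images of the $D$'s subject to exactly the relations defining $\cO(\Gr)$ in Proposition~\ref{presentationR}, one obtains $Gr_q/(q-1)\cong\cO(\Gr)$ once the presentation is known to be complete. So the two bullets are really one problem: showing that the monomials indexed by semistandard tableaux form a free $\C_q$-basis. The standard strategy is a \emph{straightening law}: use the quantum Pl\"ucker and commutation relations as rewriting rules to express any monomial in the $D_{ij}$ as a $\C_q$-linear combination of semistandard ones (the spanning half), and then prove these semistandard monomials are $\C_q$-linearly independent (the basis half). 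The spanning half I would carry out by the usual induction on a monomial order on tableaux: whenever a monomial violates the semistandard condition along a row or column, apply the appropriate relation to rewrite it in terms of strictly smaller tableaux, exactly as on pg.~110 of Ref.~\cite{fu}, but now tracking the powers of $q$ and the signs dictated by the parity function $p$ with $p(5)=1$.

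For linear independence, the cleanest route is a specialization/deformation argument: the semistandard monomials are known to be linearly independent at $q=1$ by Proposition~\ref{presentationR}, and since $Gr_q$ is a free $\C_q=\C[q,q^{-1}]$-module whose specialization at $q=1$ is $\cO(\Gr)$, a $\C_q$-linear dependence among semistandard monomials would specialize to a nontrivial dependence at $q=1$, contradicting the classical result; one checks that the leading ($q=1$) coefficients cannot all vanish by clearing denominators in $q$. The main obstacle I expect is the bookkeeping in the straightening step: unlike the purely even case, the presence of the odd row ($5$) and the odd minors $D_{i5}$ forces one to handle the extra relations $D_{i5}D_{55}=0$ and $D_{i5}D_{j5}=qD_{ij}D_{55}$, and to verify that repeated rewriting terminates and is confluent, so that the resulting semistandard expansion is well defined and genuinely reduces a fixed monomial to smaller ones without cycling. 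Since the detailed verification is already deferred to Ref.~\cite{cfl}, I would present the straightening and the $q=1$ specialization as the two conceptual pillars and refer to \cite{fu,fi2,cfl} for the combinatorial details.
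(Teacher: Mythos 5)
Your outline is essentially the proof the paper intends: the paper itself offers no self-contained argument for this proposition beyond the preceding computation of the commutation and quantum Pl\"ucker relations (it explicitly defers the proof to Ref.~\cite{cfl}), and the cited argument follows the same standard-monomial strategy you describe --- a straightening law showing the semistandard monomials span, plus linear independence over $\C_q$ --- run in parallel with the paper's own classical proof of Proposition~\ref{presentationR} and the non-super quantum case of \cite{fi2}. The one step you should tighten is the independence argument, which as phrased is mildly circular: you invoke ``$Gr_q$ is a free $\C_q$-module whose specialization at $q=1$ is $\cO(\Gr)$,'' but the statement $Gr_q/(q-1)\cong\cO(\Gr)$ is precisely part of the first bullet, which you yourself reduced to the basis theorem you are trying to prove. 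The standard repair is to run the specialization inside the ambient algebra rather than inside $Gr_q$: the subalgebra $Gr_q$ sits inside $\rGL_q(4|1)$, whose reduction modulo $(q-1)$ is $\cO(\rGL(4|1))$ by properties of Manin's quantum matrix superalgebra that are independent of this proposition, and under this reduction the quantum minors $D_{ij}$ go to the classical minors $d_{ij}$. Given a $\C_q$-linear dependence among semistandard monomials, divide out the largest common power of $(q-1)$ from the coefficients (this, rather than ``clearing denominators in $q$,'' is the relevant normalization, since denominators are powers of $q$, which are units at $q=1$), then apply the reduction map: one obtains a nontrivial $\C$-linear dependence among the classical standard monomials inside $\cO(\rGL(4|1))$, contradicting the linear independence established in Proposition~\ref{presentationR}. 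With that adjustment your two pillars --- straightening (with the extra bookkeeping for $D_{i5}$, $D_{55}$ that you correctly flag) and $q=1$ specialization --- give the proposition at the level of detail the paper itself provides, and match the approach of \cite{cfl}.
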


The quantum Grassmannian that we have constructed
admits a coaction of the quantum supergroup
$\rGL_q(4|1)$.
The proof of the following proposition amounts to a direct
check (we refer again to Ref. \cite{cfl} for more details).

\begin{proposition}
$Gr_q$ is a quantum homogeneous superspace for the quantum supergroup
$\rGL_q(4|1)$, i. e., we have a coaction given via the restriction of the
comultiplication of $\rGL_q(4|1)$:
$$
\Delta|_{Gr_q}: Gr_q \lra  \rGL_q(4|1) \otimes Gr_q.
$$
\end{proposition}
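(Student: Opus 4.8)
The plan is to prove the containment $\Delta(Gr_q)\subseteq \rGL_q(4|1)\otimes Gr_q$ and then to observe that all the coaction axioms follow for free from the Hopf structure of $\rGL_q(4|1)$. Since $\Delta$ is an algebra morphism and $Gr_q$ is generated as an algebra by the quantum minors $D_{ij}$ ($1\leq i<j\leq 4$), $D_{i5}$ ($1\leq i\leq 4$) and $D_{55}$, and since $\rGL_q(4|1)\otimes Gr_q$ is a subalgebra of $\rGL_q(4|1)\otimes \rGL_q(4|1)$ (because $Gr_q\subseteq \rGL_q(4|1)$ is a subalgebra), it suffices to check that $\Delta$ sends each of these generators into $\rGL_q(4|1)\otimes Gr_q$.

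The first and main step is therefore to establish a quantum (super) Cauchy--Binet expansion for the $2\times 2$ minors that serve as generators. Write $D^R$ for the column-$\{1,2\}$ minor on a row multi-index $R\in\{\{i,j\}:1\le i<j\le 4\}\cup\{\{i,5\}:1\le i\le 4\}\cup\{\{5,5\}\}$, so that $D^{\{i,j\}}=D_{ij}$, $D^{\{i,5\}}=D_{i5}$ and $D^{\{5,5\}}=D_{55}$, and write $D^R_S$ for the $2\times 2$ quantum minor of $\rGL_q(4|1)$ on rows $R$ and columns $S$. From $\Delta(a_{ij})=\sum_k a_{ik}\otimes a_{kj}$ together with the defining relations of $M_q(4|1)$ I expect the formula
$$
\Delta(D^R)=\sum_{S} D^R_S \otimes D^S,
$$
where $S$ runs over the same index set as $R$. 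Here the left tensor factors $D^R_S$ lie in $\rGL_q(4|1)$ by construction, while each right factor $D^S$ is again a column-$\{1,2\}$ minor, i.e.\ one of the eleven generators of $Gr_q$; hence $\Delta(D^R)\in \rGL_q(4|1)\otimes Gr_q$ for every generator, which is exactly what is needed.

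The hard part will be the explicit verification of this expansion in the three cases, with the correct $q$-powers and Koszul signs dictated by the parity $p(5)=1$. For a purely even row index $R=\{i,j\}$ with $j\le 4$ this is the $q$-deformed classical minor expansion, entirely parallel to the non-super computation of Ref.~\cite{fi2}. The genuinely new cases are $R=\{i,5\}$ and $R=\{5,5\}$: one must track carefully the signs coming from the graded tensor product and from the fact that $a_{5k}$ is odd for $k\le 4$, and one must check that the degenerate minor $D_{55}=a_{51}a_{52}$ (nonzero precisely because $\ep_5$ is odd) comultiplies as $\Delta(D_{55})=\sum_S D^{\{5,5\}}_S\otimes D^S$ with the term $S=\{5,5\}$ genuinely contributing. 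Collecting the right-hand factors, $S$ ranges exactly over the eleven index sets producing $D_{ij},D_{i5},D_{55}$, so the expansion closes inside $Gr_q$; these are the computations recorded in Ref.~\cite{cfl}, which I would either reproduce or invoke.

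Finally, once $\Delta(Gr_q)\subseteq \rGL_q(4|1)\otimes Gr_q$ is established, the coaction axioms are automatic by restriction. Coassociativity, $(\Delta\otimes\mathrm{id})\circ\Delta|_{Gr_q}=(\mathrm{id}\otimes\Delta|_{Gr_q})\circ\Delta|_{Gr_q}$ on $Gr_q$, is just the coassociativity of $\Delta$ on $\rGL_q(4|1)$, the right-hand side making sense precisely because of the containment. The counit property $(\ep\otimes\mathrm{id})\circ\Delta|_{Gr_q}=\mathrm{id}_{Gr_q}$ is likewise the restriction of the counit axiom on $\rGL_q(4|1)$. Since $\Delta|_{Gr_q}$ is moreover an algebra morphism, being the restriction of the algebra morphism $\Delta$, the superalgebra $Gr_q$ is a $\rGL_q(4|1)$-comodule algebra, that is, a quantum homogeneous superspace, as claimed.
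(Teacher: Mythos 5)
Your proposal is correct and takes essentially the same route as the paper: the paper's proof consists precisely of the ``direct check'' you outline (with the computations deferred to Ref.~\cite{cfl}), namely that $\Delta$ sends each generating quantum minor into $\rGL_q(4|1)\otimes Gr_q$ via the quantum super Cauchy--Binet type expansion, the comodule axioms then being inherited by restriction from the Hopf structure of $\rGL_q(4|1)$. Your flagging of the cases $R=\{i,5\}$ and $R=\{5,5\}$ as the ones requiring careful sign and $q$-power bookkeeping is exactly where the cited reference carries out the work.
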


\section{Quantum Minkowski superspace}
\label{qmink}

We now turn to the
quantum deformation of the big cell inside $Gr_q$;
it will be our model for the quantum Minkowski superspace.

\medskip

In Section \ref{smink} we wrote the action of the lower
parabolic supergroup $P_l$ using the functor of
points (\ref{actionbigcell}). We want  now to translate it into the
coaction language in order
to make the generalization to the quantum setting.

\medskip

Let $\cO(P_l)$ be the superalgebra:
$$
\cO(P_l):=\cO(\rGL(4|1))/\cI
$$
where $\cI$ is the (two-sided) ideal generated by
$$
g_{1j},g_{2j}, \quad \hbox{for}\quad
j=3,4 \quad \hbox{and}\quad \gamma_{15},\gamma_{25}.
$$
This is the Hopf superalgebra coordinate superring
of the lower parabolic subgroup $P_l$, with
comultiplication naturally inherited by
$\cO(\rGL(4|1))$.

In matrix form, for $\cA$ local, we have
\be
h_{P_l}(\cA)=\left\{
\begin{pmatrix}
g_{11} & g_{12} & 0 & 0 & 0 \\
g_{21} & g_{22} & 0 & 0 & 0 \\
g_{31} & g_{32} & g_{33} & g_{34} & \ga_{35} \\
g_{41} & g_{42} & g_{43} & g_{44} & \ga_{45} \\
\ga_{51} & \ga_{52} & \ga_{53} & \ga_{54} & g_{55} \\
\end{pmatrix}
\right\} \subset h_{\rGL(m|n)}(\cA).\label{lowparabolic}
\ee
The superalgebra representing the big cell $U_{12}$
can be realized as a subalgebra of $\cO(P_l)$. In order to see this better, let
us make the following two different changes of variables in $P_l$:
\be
\begin{pmatrix}
g_{11} & g_{12} & 0 & 0 & 0 \\
g_{21} & g_{22} & 0 & 0 & 0 \\
g_{31} & g_{32} & g_{33} & g_{34} & \ga_{35} \\
g_{41} & g_{42} & g_{43} & g_{44} & \ga_{45} \\
\ga_{51} & \ga_{52} & \ga_{53} & \ga_{54} & g_{55} \\
\end{pmatrix}=\begin{pmatrix}
x      & 0 & 0 \\
tx     & y & y\eta\\
\ttau x & d\xi & d \\
\end{pmatrix}=
\begin{pmatrix}
x      & 0 & 0 \\
tx     & y & y\eta\\
d\tau  & d\xi & d \\
\end{pmatrix} \label{coordchange}
\ee
Notice that the only difference between
the two sets of variables is that we replace $\tau$ with $\tilde \tau$
and we have:
\be d\tau=\ttau x,\label{ttau}\ee

The next proposition tells us that these are sets of generators
for $\cO(P_l)$ and that
having $\ttau$ is essential to describe the big cell. Again for
the proof we refer the reader to Ref. \cite{cfl}, while the explicit
expressions for the generators come from a direct calculation.

\begin{proposition}${}^{}$\label{coaction on big cell}

1. The Hopf superalgebra $\cO(P_l)$ is generated by the following sets
of variables:
\begin{itemize}
\item $x$, $y$, $t$, $\ttau$, $\xi$, $\eta$ and $d$;
\item $x$, $y$, $t$, $\tau$, $\xi$, $\eta$ and $d$
\end{itemize}
defined as
\be
\begin{array}{cc}
x =\left( \begin{array}{cc}
g_{11} & g_{12}\\ g_{21} & g_{22} \end{array} \right),
 &
y =\left( \begin{array}{cc}
g_{33} & g_{34}\\ g_{43} & g_{44} \end{array} \right), \\ \\
t =\left(\begin{array}{cc} -d_{23}d_{12}^{-1} & d_{13}d_{12}^{-1} \\
-d_{24}d_{12}^{-1} & d_{14}d_{12}^{-1} \end{array} \right) &
d=g_{55} \\ \\
\ttau =( -d_{25}d_{12}^{-1} , d_{15}d_{12}^{-1}) &
\tau=( g_{55}^{-1}\ga_{51}, g_{55}^{-1}\ga_{52}) \\ \\
\eta=\begin{pmatrix} {d^{34}_{34}}^{-1}\ga_{35} \\
{d^{34}_{34}}^{-1}\ga_{45} \end{pmatrix}
& \xi=\begin{pmatrix}g_{55}^{-1}\ga_{53} &
g_{55}^{-1}\ga_{54}  \end{pmatrix}
\end{array} \label{explicitcoord}
\ee
where for $1 \leq i<j \leq 4$
$$
d_{ij}=g_{i1}g_{j2}-g_{j1}g_{i2}, \qquad
d_{i5}=g_{i1}\ga_{52}-\ga_{51}g_{i2}, \qquad
d^{34}_{34}=g_{33}g_{44}-g_{34}g_{43}.
$$

\medskip

2. The subalgebra of  $\cO(P_l)$ generated by $(t, \ttau)$
coincides with  the big cell superring
$\cO(U_{12})$ as defined in (\ref{bigcellsuperalgebra}). It is given
by the projective localization of $\cO(\Gr)$ with respect to $d_{12}$.

\medskip

3. There is a well defined
coaction $\tilde\Delta$ of
$\cO({P_l})$ on $\cO(U_{12})$ induced by the coproduct
in $\cO({P_l})$,
$$
\begin{CD}
\tilde\Delta:\cO(U_{12})@>\tilde \Delta>>\cO({P_l}) \otimes \cO(U_{12})
\end{CD}
$$
which explicitly takes the form:
\begin{align*}
\tilde\Delta t_{ij}&=t_{ij}\otimes 1+y_{ia} S(x)_{bj}\otimes t_{ab}+
y_{i}\eta_aS(x)_{bj}\otimes \ttau_{jb},&\nonumber \\ \\
\tilde\Delta\ttau_j=&(d\otimes 1)(\tau_a\otimes 1+
\xi_b\otimes t_{ba}+1\otimes \ttau_a)(S(x)_{aj}\otimes 1),&\nonumber
\end{align*}
The reader should notice right away that this is the dual to
the expression (\ref{actionbigcell}).
\end{proposition}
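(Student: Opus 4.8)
The plan is to prove the three assertions in order, since each one builds on the previous. For part~1, I would verify that the change of variables \eqref{coordchange} is invertible over the appropriate localizations, so that each of the two proposed sets indeed generates $\cO(P_l)$. Concretely, the matrix equation forces $x = \bigl(\begin{smallmatrix} g_{11} & g_{12} \\ g_{21} & g_{22}\end{smallmatrix}\bigr)$, and then reading off the lower blocks expresses $t$, $y$, $\eta$, $\xi$, $d$ and $\tau$ (respectively $\ttau$) in terms of the $g_{ij}$ and $\ga_{ij}$. The key point is that passing to $P_l$ requires inverting $\det x = d_{12}$ and $\det y = d^{34}_{34}$ (both even and invertible on the relevant locus), and a short computation using Cramer's rule recovers the explicit formulas \eqref{explicitcoord} for $t$ and $\ttau$ in terms of the minors $d_{ij}$, $d_{i5}$. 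That $\ttau$ and $\tau$ differ precisely by the relation $d\tau = \ttau x$ of \eqref{ttau} is then immediate, and it shows both variable sets generate the same algebra.

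For part~2, I would observe that by part~1 of Proposition~\ref{coaction on big cell} the entries of $t$ and $\ttau$ are exactly $-d_{2j}d_{12}^{-1}$, $d_{1j}d_{12}^{-1}$ and $-d_{25}d_{12}^{-1}$, $d_{15}d_{12}^{-1}$, i.e.\ ratios of Pl\"ucker-type minors by $d_{12}$. Under the identification of minors with the Grassmannian coordinates from Corollary~\ref{superpres} (here $d_{ij} = q_{ij}$, $d_{i5}=\lambda_i$, $d_{55}=a_{55}$), these are precisely the generators $x_{ij} = q_{ij}q_{12}^{-1}$ and $\xi_j = \lambda_j q_{12}^{-1}$ of the big-cell superalgebra \eqref{bigcellsuperalgebra}. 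Thus the subalgebra of $\cO(P_l)$ generated by $(t,\ttau)$ is exactly the projective localization of $\cO(\Gr)$ at $d_{12}$, which by the discussion preceding \eqref{bigcellsuperalgebra} is $\cO(U_{12}) \cong \C^{4|2}$.

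For part~3, the strategy is to obtain the coaction $\tilde\Delta$ by restricting the coproduct of $\cO(P_l)$ and checking that it closes on the subalgebra generated by $(t,\ttau)$. Since $\tilde\Delta$ is an algebra map, it suffices to compute $\tilde\Delta$ on the generators $t_{ij}$ and $\ttau_j$ and verify the resulting expressions lie in $\cO(P_l)\otimes\cO(U_{12})$. I would apply $\Delta$ to the formulas of \eqref{explicitcoord}, using that $\Delta$ is multiplicative and that $\Delta g_{ij} = \sum_k g_{ik}\otimes g_{kj}$; the combinatorics of the block decomposition then produces the stated formulas, where the appearance of the antipode $S(x)$ reflects the $x^{-1}$ on the right in the group action \eqref{actionbigcell}. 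The essential consistency check is that all occurrences of the ``missing'' generators (those not in the $(t,\ttau)$ subalgebra) on the right tensor factor must reorganize into the big-cell variables $t_{ab}$ and $\ttau_a$ — this is exactly why the $\ttau$ parametrization is forced, as \eqref{ttau} converts the naive $\tau$-terms into $\ttau$-terms times $x$-factors that get absorbed by $S(x)$.

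The main obstacle I anticipate is part~3: verifying that the restricted coproduct genuinely closes on the big-cell subalgebra. The danger is that expanding $\Delta$ on $t$ and $\ttau$ naively produces terms involving $\tau$, $x^{-1}$, $y$, $\eta$, $\xi$ and $d$ on both tensor factors, and only a careful bookkeeping — substituting $\tau = d^{-1}\ttau x$ from \eqref{ttau} and repeatedly using $x S(x) = \id$ — collapses these into the clean right-hand factors $1$, $t_{ab}$, $\ttau_a$ displayed in the statement. Once this is confirmed as a dual to \eqref{actionbigcell}, coassociativity and counitality of $\tilde\Delta$ follow formally from those of $\Delta$ on $\cO(P_l)$, so no separate argument is needed there.
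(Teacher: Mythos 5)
Your proposal is correct and takes essentially the same approach as the paper: the paper gives no detailed argument itself, deferring the proof to Ref.~\cite{cfl} and noting only that the explicit expressions (\ref{explicitcoord}) come from ``a direct calculation,'' which is precisely the block-matrix/Cramer's-rule inversion and restriction of the coproduct $\Delta g_{ij}=\sum_k g_{ik}\otimes g_{kj}$ that you carry out. In particular, your key point in part~3 --- that the relation $d\tau=\ttau x$ of (\ref{ttau}) is what makes the restricted coproduct close on the subalgebra generated by $(t,\ttau)$, with $S(x)$ absorbing the $x^{-1}$ factors --- is exactly what the paper flags when it says that ``having $\ttau$ is essential to describe the big cell.''
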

\medskip

We now turn to the quantum setting.
In order to keep our notation minimal, we use the same letters
as in the classical case  to denote the generators of the
quantum big cell and the quantum supergroups.

\medskip

Let $\cO(P_{l, q})$ be the superalgebra:
$$
\cO(P_{l, q}):=\cO(\rGL_q(4|1))/\cI_q
$$
where $\cI_q$ is the (two-sided) ideal in $\cO(\rGL_q(4|1))$
generated by
\be g_{1j},g_{2j}, \quad \hbox{for}\quad j=3,4
\quad \hbox{and}\quad \gamma_{15},\gamma_{25}.\label{ideal}\ee This is
the Hopf superalgebra of the lower parabolic subgroup, again with
comultiplication the one naturally inherited from
$\cO(\rGL_q(4|1))$.

As in the classical case, it is convenient to change coordinates
exactly in the same way (see \ref{coordchange}),
this time, however, paying extra attention to the
order in which we take the variables.
We can write the new coordinates for $\cO(P_{l,q})$ explicitly:

\begin{align*}
&x =\begin{pmatrix}
g_{11} & g_{12}\\ g_{21} & g_{22}  \end{pmatrix},
&&
t =
\begin{pmatrix} -q^{-1}D_{23}D_{12}^{-1} & D_{13}D_{12}^{-1}\\
-q^{-1}D_{24}D_{12}^{-1} & D_{14}D_{12}^{-1}  \end{pmatrix}
\nonumber \\\nonumber \\
&y = \begin{pmatrix}
g_{33} & g_{34}\\ g_{43} & g_{44} \end{pmatrix},&&
d =g_{55}, \\ \\
&
\ttau=\begin{pmatrix}  -q^{-1}D_{25}D_{12}^{-1} &
D_{15}D_{12}^{-1}  \end{pmatrix}, &
&\xi=\begin{pmatrix}g_{55}^{-1}\ga_{53} &
g_{55}^{-1}\ga_{54}  \end{pmatrix}
\\ \end{align*}
$$
\eta =y^{-1} \begin{pmatrix} \ga_{35} \\ \ga_{45}  \end{pmatrix}=
{(D_{34}^{34})}^{-1}
\begin{pmatrix} g_{44} & -q^{-1}g_{34}\\ -qg_{43} & g_{33}
 \end{pmatrix}=
\begin{pmatrix}
-q^{-1}{D^{34}_{34}}^{-1}D^{45}_{34} \\
{D^{34}_{34}}^{-1}D_{34}^{35} \\
\end{pmatrix}
$$

It is not hard to see that $\cO(P_{l,q})$ is also generated by
$x, y, d,\eta, \xi$ and $\ttau$.

\begin{remark} \label{qsuperpoinc}
The {\it quantum Poincar\'{e}  supergroup times dilations} is the
quotient of $\cO(P_{l,q})$  by the ideal $\xi=0$.
In fact as one can readily check with a simple calculation,
if $\cO(Po)$ denotes the function algebra of the super
(unquantized) Poincar\'{e}
groups times dilations, we have
that
$$
\left( \, \cO(P_{l,q}) \, / \, (\xi) \right) / (q-1) \cong \cO(Po).
$$
One can also easily check
that $(\xi)$ is a Hopf ideal, so the comultiplication goes to the quotient.
The quantum  Poincar\'{e} supergroup times dilations is then generated
by the images in the quotient of
$x, y, d,\eta$ and $\ttau$.  In matrix form, one has
$$\begin{pmatrix}
x      & 0 & 0 \\
tx     & y & y\eta \\
\ttau x & 0 & d  \\
\end{pmatrix}.
$$
Explicitly in these coordinates its presentation is given as follows:
$$
\cO(P_{l,q}) \, / \, (\xi) =\C_q<t,x,y,\eta,\tau>/I_{Po,q}
$$
where $I_{Po,q}$ is the ideal generated by the following relations. The
indeterminates
$x$ and $y$ behave respectively
as quantum (even) matrices, that is, their entries are
subject to the relations \ref{maninrelations}. In other words we have
for $x$ (and similarly for $y$):
$$
\begin{array}{c}
x_{11}x_{12}=q^{-1} x_{12}x_{11}, \quad
x_{11}x_{21}=q^{-1} x_{21}x_{11}, \quad
x_{21}x_{22}=q^{-1} x_{22}x_{21} \\ \\
x_{12}x_{22}=q^{-1} x_{22}x_{12}, \quad
x_{12}x_{21}=x_{21}x_{12}, \quad
x_{11}x_{22}-x_{22}x_{11}=(q^{-1}-q)x_{12}x_{21}
\end{array}
$$
Moreover the entries in $x$ and $y$ commute with each other.
$x$ and $t$, $\ttau$ commute in the following way. Let $i=1,2$,
$j=3,4$.
$$
\begin{array}{c}
x_{1i} t_{j1}=q^{-1} t_{j1} x_{1i}, \qquad
x_{2i} t_{j1}=  t_{j1} x_{2i}, \\ \\
x_{1i} \ttau_{51}=q^{-1} \ttau_{51} x_{1i}, \qquad
x_{2i} \ttau_{51}=  \ttau_{51} x_{2i},
\\ \\
x_{1i} \ttau_{52}=\ttau_{52} x_{1i}, \qquad
x_{2i} \ttau_{52}=q^{-1} \ttau_{52} x_{2i} \\ \\
\end{array}
$$
$x$ commutes with $\eta$ and $d$.
$y$, $t$ and $\ttau$ satisfy similar relations as $x$, $t$ and $\tau$
that we leave to the reader as an exercise (the rows are exchanged with
the columns). $y$ and $\eta$ commute following the rules of quantum
super matrices, very much the same calculation and relations expressed
in \ref{commt-tau}.
$y$ and $d$ commute.
The commutation among $t$ and $\ttau$ are expressed in prop. \ref{commt-tau}.
$t$ and $\eta$ commute. $t$, $\tau$
and $d$ satisfy the following
relations.
$$
\begin{array}{c}
t_{ij}d = d t_{ij}-(q^{-1}-q)\eta_{i5} \ttau_{5i} \\ \\
\ttau_{5j}d = d \ttau_ {5j}
\end{array}
$$
$\ttau$ and $\eta$ commute with each other, while finally
$$
\eta_{j5}d=q^{-1}d\eta_{j5}.
$$
\end{remark}

In analogy with the classical (non quantum) supersetting, we give
the following definition.

\begin{definition}
We  define the \textit{quantum big cell}  $\cO_q(U_{12})$
as the subring of $\cO(P_{lq})$
generated by $t$ and $\ttau$.
\end{definition}

We compute now the quantum commutation relations among
the generators of the quantum big cell $\cO_q(U_{12})$ , which is our
chiral Minkowski superspace, and see that the quantum big cell
admits a well
defined coaction of the quantum supergroup $\cO(P_{lq})$.

\begin{proposition} \label{commt-tau}
The quantum big cell superring $\cO_q(U_{12})$ has
the following  presentation:
$$
\cO_q(U_{12}) := \C_q \langle t_{ij}, \ttau_{5j} \rangle \, \big/ \, I_U,
\qquad 3 \leq i \leq 4, \, j=1,2
$$
where $I_U$ is the ideal generated by the relations:
$$
\begin{array}{c}
t_{i1}t_{i2}=q \, t_{i2}t_{i1}, \qquad
t_{3j}t_{4j}=q^{-1} \, t_{4j}t_{3j},
\qquad 1 \leq j \leq 2, \quad 3 \leq i \leq 4
\\ \\
t_{31}t_{42}=t_{42}t_{31}, \qquad
t_{32}t_{41}=t_{41}t_{32}+(q^{-1}-q)t_{42}t_{31},
\\ \\
\ttau_{51}\ttau_{52}=-q^{-1}\ttau_{52}\ttau_{51}, \qquad
t_{ij}\ttau_{5j}=q^{-1}\ttau_{5j}t_{ij}, \qquad 1 \leq j \leq 2 \\ \\
t_{i1}\ttau_{52}= \ttau_{52}t_{i1},
\qquad  t_{i2}\ttau_{51}=\ttau_{51} t_{i2}+(q^{-1}-q)t_{i1}\ttau_{52}.
\end{array}
$$
\end{proposition}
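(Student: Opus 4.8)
The plan is to prove the presentation in two stages. First I would show that each relation listed in $I_U$ actually holds among the elements $t_{ij}=D_{\bullet}D_{12}^{-1}$ and $\ttau_{5j}=D_{\bullet5}D_{12}^{-1}$ of $\cO(P_{l,q})$, so that there is a well-defined surjection $\C_q\langle t_{ij},\ttau_{5j}\rangle/I_U\twoheadrightarrow\cO_q(U_{12})$; second I would show this surjection is injective, i.e.\ that $I_U$ already exhausts all relations. For the first stage everything can be computed inside $\cO(\rGL_q(4|1))$ from the explicit minor formulas of Proposition~\ref{coaction on big cell}. The only preliminary is the conjugation behaviour of $D_{12}^{-1}$: each of the minors $D_{13},D_{14},D_{23},D_{24},D_{15},D_{25}$ shares an index with $\{1,2\}$, so the ``not all distinct'' commutation rule gives $D_{12}D_{kl}=q^{-1}D_{kl}D_{12}$, whence $D_{12}^{-1}D_{kl}=q\,D_{kl}D_{12}^{-1}$; moving the inner inverse factor to the right in any product simply contributes a power of $q$.

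Each relation is then a short direct computation: substitute the definitions, push the inner $D_{12}^{-1}$ past the next minor to pick up a factor $q$, and reduce the surviving product $D_{ab}D_{cd}$ (or $D_{a5}D_{b5}$) by the minor commutation relations. A typical even case is
\begin{equation*}
t_{31}t_{32}=-q^{-1}D_{23}D_{12}^{-1}D_{13}D_{12}^{-1}=-D_{23}D_{13}D_{12}^{-2}=-q\,D_{13}D_{23}D_{12}^{-2}=q\,t_{32}t_{31},
\end{equation*}
using $D_{12}^{-1}D_{13}=q\,D_{13}D_{12}^{-1}$ and $D_{23}D_{13}=q\,D_{13}D_{23}$. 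The relations without a correction term need only these commutation rules. The two relations carrying a $(q^{-1}-q)$ term are the delicate ones: reducing $D_{13}D_{24}$ by the commutation rule for the case $i<k<j<l$ produces a spurious $D_{12}D_{34}$ summand, which only collapses to the stated form after one re-expresses it through the quantum Pl\"ucker relation $D_{12}D_{34}-q^{-1}D_{13}D_{24}+q^{-2}D_{14}D_{23}=0$. Similarly the odd sector closes on $t,\ttau$ only because the Pl\"ucker identity $D_{i5}D_{j5}=q\,D_{ij}D_{55}$ lets one eliminate every $D_{55}$ that would otherwise appear; this elimination is precisely what generates the sign in $\ttau_{51}\ttau_{52}=-q^{-1}\ttau_{52}\ttau_{51}$. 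So the genuinely non-routine point of stage one is that the big-cell relations are not direct images of the minor commutation rules alone but mixtures of those rules with the Pl\"ucker relations.

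The harder and more conceptual step is stage two, injectivity. Here I would invoke Bergman's Diamond Lemma: fix the order $t_{31}<t_{32}<t_{41}<t_{42}<\ttau_{51}<\ttau_{52}$, orient each relation of $I_U$ as a rule lowering its leading word, and verify that every overlap ambiguity between two rules resolves to a common normal form. With only six generators the ambiguities are few, and I expect the main obstacle to be the triple overlaps mixing the corrected even relation $t_{32}t_{41}=t_{41}t_{32}+(q^{-1}-q)t_{42}t_{31}$ with the odd pair $\ttau_{51}\ttau_{52}$, where the $(q^{-1}-q)$ terms must cancel consistently; the odd generators also force one to track signs carefully. Once confluence is checked, the ordered monomials in the six generators form a $\C_q$-basis of $\C_q\langle t_{ij},\ttau_{5j}\rangle/I_U$. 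Comparing with the classical statement of Proposition~\ref{coaction on big cell}, part~2, which identifies the undeformed big cell with the free supercommutative algebra $\C[x_{ij},\xi_j]\cong\C^{4|2}$ on exactly these six ordered monomials, shows that the specialization at $q=1$ is an isomorphism and that the $\C_q$-ranks agree; hence the stage-one surjection is injective and the presentation is exact. A more structural alternative would be a flatness argument—exhibiting $\cO_q(U_{12})$ directly as a free $\C_q$-module with $q=1$ reduction $\cO(U_{12})$ and matching Hilbert series—but the explicit rewriting via the Diamond Lemma is the most self-contained route.
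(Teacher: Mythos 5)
Your stage one is correct, and it is exactly the computation the paper intends (the paper itself gives no proof of this proposition: it is stated bare, with all details deferred to Ref.~\cite{cfl}). I checked your sample calculations: the conjugation rule $D_{12}^{-1}D_{kl}=q\,D_{kl}D_{12}^{-1}$ holds because every minor entering $t_{ij}$, $\ttau_{5j}$ shares an index with $\{1,2\}$, and your central observation is the right one --- the two $(q^{-1}-q)$-corrected relations and the odd--odd relation are \emph{not} consequences of the minor commutation rules alone, but need the quantum Pl\"ucker relations $D_{12}D_{34}-q^{-1}D_{13}D_{24}+q^{-2}D_{14}D_{23}=0$, $D_{ij}D_{k5}-q^{-1}D_{ik}D_{j5}+q^{-2}D_{i5}D_{jk}=0$ and $D_{i5}D_{j5}=qD_{ij}D_{55}$ to eliminate the spurious $D_{12}D_{34}$, $D_{12}D_{35}$ and $D_{55}$ terms. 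This part of your plan reproduces faithfully what the paper (via \cite{cfl}) does.

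Stage two is where the proposal breaks down, in two places. First, the closing inference ``the $\C_q$-ranks agree, hence the surjection is injective'' is a non sequitur: a surjection between free $\C_q$-modules of the same infinite rank need not be injective, and you have no a priori identification of $\cO_q(U_{12})/(q-1)$ with the classical $\cO(U_{12})$, since reduction mod $(q-1)$ does not commute with passing to the subalgebra $\cO_q(U_{12})\subset\cO(P_{l,q})$ without flatness. What is actually required is $\C_q$-linear independence of the images of your ordered monomials inside $\cO(P_{l,q})$, and that needs a PBW-type freeness of $M_q(4|1)$ over $\C_q$ (inherited by the quotient $\cO(P_{l,q})$ and its localization) plus a divide-by-$(q-1)$/specialization argument; so the flatness input you demote to ``a more structural alternative'' is an indispensable ingredient, not an option. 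Second, and more concretely, if you run the Diamond Lemma literally on the listed relations you hit an obstruction you did not anticipate: nothing in $I_U$ rewrites $\ttau_{5j}^2$, and by bidegree considerations $\ttau_{5j}^2\notin I_U$, so your normal forms contain \emph{all} powers of $\ttau_{5j}$; meanwhile in $\cO(P_{l,q})$ one has $\ttau_{5j}^2=0$, because $\ttau_{5j}^2$ is a $q$-power multiple of $D_{i5}^2D_{12}^{-2}$ and a short computation gives $D_{i5}^2=0$ once one uses $a_{51}^2=a_{52}^2=0$ (part of Manin's definition of $M_q(m|n)$, though Definition~\ref{maninrelations} omits it). Hence the monomial count against $\C[x_{ij},\xi_j]\cong\C^{4|2}$, where $\xi_j^2=0$, cannot come out as you claim: the relations $\ttau_{5j}^2=0$ must be adjoined to $I_U$ (or understood as implicit) before any basis-comparison argument can close, and your proposal, which treats the listed relations as exhaustive, stalls exactly there.
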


As in the classical setting we have the following proposition.

\begin{proposition}
The quantum big cell $\cO_q(U_{12})$ admits a coaction of
$\cO(P_{l,q})$ obtained by
restricting suitably the comultiplication in
 $\cO{(P_{l,q})}$.
In other words we have a well defined morphism:
$$
\begin{array}{ccc}
\tilde\Delta:\cO_q(U_{12}) & \lra &  \cO(P_{l,q}) \otimes \cO_q(U_{12})
\end{array}
$$
satisfying the coaction properties and give explicitly by:
(see \ref{coaction on big cell}),
\begin{align*}\tilde\Delta t_{ij}&=t_{ij}\otimes 1+y_{ia} S(x)_{bj}\otimes t_{ab}+y_{i}\eta_aS(x)_{bj}\otimes \ttau_{jb},&
\\ \\
\tilde\Delta\ttau_j&=(d\otimes 1)(\tau_a\otimes 1+\xi_b\otimes t_{ba}+1\otimes \ttau_a)(S(x)_{aj}\otimes 1)&.
\end{align*}
by choosing as before generators $x$, $y$, $t$, $d$, $\tau$, $\eta$, $\xi$
for $\cO(P_{l,q})$ and $t$, $\ttau$ for $\cO_q(U_{12})$ with
$d\tau=\ttau x$.

\medskip

Furthermore, this coaction goes down to a well defined coaction for
the quantization of the super Poincar\'{e} group (see remark \ref{qsuperpoinc}).

\end{proposition}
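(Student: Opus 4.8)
The plan is to obtain $\tilde\Delta$ as the restriction of the comultiplication $\Delta$ of the Hopf superalgebra $\cO(P_{l,q})$ to the subalgebra $\cO_q(U_{12})$. By definition $\cO_q(U_{12})$ is generated inside $\cO(P_{l,q})$ by the entries of $t$ and $\ttau$, and $\Delta$ is inherited from $\rGL_q(4|1)$. The whole proposition then reduces to the single claim
\[
\Delta\bigl(\cO_q(U_{12})\bigr)\subseteq \cO(P_{l,q})\otimes\cO_q(U_{12}),
\]
i.e.\ that when $\Delta$ is applied to a big-cell generator the right-hand tensor factor involves only $t$ and $\ttau$. Granting this I would set $\tilde\Delta:=\Delta|_{\cO_q(U_{12})}$: as a restriction of the algebra morphism $\Delta$ it is an algebra morphism, and the coaction axioms $(\Delta\otimes\id)\tilde\Delta=(\id\otimes\tilde\Delta)\tilde\Delta$ and $(\epsilon\otimes\id)\tilde\Delta=\id$ follow at once from coassociativity and counitality of $\Delta$. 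For the counit one uses that the big-cell coordinates vanish at the identity, so $\epsilon(t)=\epsilon(\ttau)=\epsilon(\eta)=0$ and $\epsilon(x)=\epsilon(y)=\id$, whence $(\epsilon\otimes\id)\tilde\Delta\,t_{ij}=t_{ij}$ and similarly for $\ttau$. Thus essentially all the content sits in the displayed inclusion, and the stated formulas are precisely what one gets for $\Delta t_{ij}$ and $\Delta\ttau_{5j}$.

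To verify the claim and read off the formulas I would use the matrix factorization reproducing (\ref{coordchange}),
\[
g=u\,\ell,\qquad u=\begin{pmatrix}\id_2&0&0\\ t&\id_2&0\\ \ttau&0&1\end{pmatrix},\qquad \ell=\begin{pmatrix}x&0&0\\ 0&y&y\eta\\ 0&d\xi&d\end{pmatrix}.
\]
Writing $\Delta(g)=g^{(1)}g^{(2)}=u^{(1)}\ell^{(1)}u^{(2)}\ell^{(2)}$ and reordering the middle product into unipotent-times-Levi form, $\ell^{(1)}u^{(2)}=u'\ell'$, gives $\Delta(g)=(u^{(1)}u')(\ell'\ell^{(2)})$, so the big-cell part of $\Delta(g)$ is $u^{(1)}u'$. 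A block computation of $\ell^{(1)}u^{(2)}$ and of the unipotent factor $u'$ returns exactly
\[
\tilde\Delta\,t=t\otimes 1+(y\otimes 1)(1\otimes t)(S(x)\otimes 1)+(y\eta\otimes 1)(1\otimes\ttau)(S(x)\otimes 1),
\]
together with $\tilde\Delta\,\ttau=(d\otimes 1)\bigl(\tau\otimes 1+(\xi\otimes 1)(1\otimes t)+1\otimes\ttau\bigr)(S(x)\otimes 1)$, using $d\tau=\ttau x$ to package $\ttau^{(1)}$ into the first summand; in components these are the stated formulas, and their right tensor factors are built only from $t$ and $\ttau$, proving the inclusion. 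Concretely, applying $\Delta$ to the minors defining $t$ and $\ttau$ uses the quantum Laplace expansion $\Delta(D_{IJ})=\sum_K D_{IK}\otimes D_{KJ}$ together with $\Delta(D_{12}^{-1})=\Delta(D_{12})^{-1}$, after which reduction modulo the parabolic ideal $\cI_q$ of (\ref{ideal}) annihilates almost every summand, since rows $1,2$ have vanishing entries in columns $3,4,5$.

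The main obstacle is exactly this quantum bookkeeping. In contrast to the classical case the reordering $\ell^{(1)}u^{(2)}=u'\ell'$ and the Laplace expansions must be carried out tracking the powers of $q$ and the sign and parity factors produced by the odd generators $\eta,\xi,\ttau$, and the inverses $D_{12}^{-1}$ and $x^{-1}=S(x)$ must be commuted past the surviving generators by means of the relations collected in Remark \ref{qsuperpoinc}. As an independent check, entirely bypassing the restriction argument, I would verify directly that the right-hand sides of the two displayed formulas satisfy the defining relations $I_U$ of Proposition \ref{commt-tau}, using those same commutation relations; this re-proves that $\tilde\Delta$ is a well-defined algebra map.

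Finally, for the descent to the quantum super Poincar\'e group I would compose $\tilde\Delta$ with the quotient map $\cO(P_{l,q})\to\cO(P_{l,q})/(\xi)$ on the first tensor factor. Since $(\xi)$ is a Hopf ideal (Remark \ref{qsuperpoinc}), this quotient is a morphism of Hopf superalgebras, so the composite $\cO_q(U_{12})\to\bigl(\cO(P_{l,q})/(\xi)\bigr)\otimes\cO_q(U_{12})$ is again a coaction; explicitly the middle term $(\xi\otimes 1)(1\otimes t)$ in $\tilde\Delta\,\ttau$ disappears while $\tilde\Delta\,t$ is unchanged, yielding the coaction of the quantum super Poincar\'e group times dilations on the quantum Minkowski superspace.
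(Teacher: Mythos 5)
Your proposal is correct and takes essentially the same approach as the paper: the coaction is defined by restricting the comultiplication of $\cO(P_{l,q})$ to the subalgebra generated by $t$ and $\ttau$, the explicit formulas arise from the same unipotent-times-Levi refactoring that underlies the classical computation (\ref{standardform})--(\ref{actionbigcell}) and Proposition \ref{coaction on big cell}, and the Poincar\'e case follows by passing to the quotient by the Hopf ideal $(\xi)$ of Remark \ref{qsuperpoinc}. The paper itself presents this as a direct check and defers the $q$-bookkeeping to Ref. \cite{cfl}, which is exactly the part you flag as the remaining computational work.
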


To compare with other deformations of the Minkowski space, we write here the even part of $\cO_q(U_{12}$ in terms of the more familiar generators

$$
t=x^\mu \sigma_\mu = \begin{pmatrix} x^0 + x^3 & x^1 - i x^2 \\ x^1 + i x^2 & x^0 + x^3 \end{pmatrix}.
$$

The commutation relations of the generators  $x^\mu$ are then \cite{cfln}
\begin{eqnarray*}
x^0 x^1 & = & \frac{2}{q^{-1}+q} x^1 x^0 + i \frac{q^{-1}-q}{q^{-1}+q} x^0 x^2, \\
x^0 x^2 & = & \frac{2}{q^{-1}+q} x^2 x^0 - i \frac{q^{-1}-q}{q^{-1}+q} x^0 x^1, \\
x^0 x^3 & = & x^3 x^0, \\
x^1 x^2 & = & \frac{i(q^{-1}+q)}{2} \left(- (x^0)^2 +(x^3)^2 + x^3 x^0 -x^0 x^3\right),\\
x^1 x^3 & = & \frac{2}{q^{-1}+q} x^3 x^1 - i \frac{q^{-1}-q}{q^{-1}+q} x^2 x^3, \\
x^2 x^3 & = & \frac{2}{q^{-1}+q} x^3 x^2 + i \frac{q^{-1}-q}{q^{-1}+q} x^1 x^3. \\
\end{eqnarray*}

\section{Chiral superfields  in Minkowski superspace} \label{chiral}

In this section we wish to motivate the importance of the
chiral conformal superspace and its quantum deformation in physics.
We introduce  chiral superfields in Minkowski superspace as they are used in physics. We start by introducing the complexified Minkowski space: the chiral superfields are a sub superalgebra of the coordinate superalgebra of Minkowski space. They can also be seen as the coordinate superalgebra of the chiral Minkowski superspace, which is complex.

\subsection{Definitions}

We consider the complexified Minkowski space $\C^4$. The $N=1$ {\sl scalar superfields} on the complexified Minkowski space are elements of the commutative superalgebra  \be\cO(\C^{4|4})\equiv C^\infty(\C^4)\otimes \Lambda[\theta^1,\theta^2,\bar\theta^1,\bar\theta^2],\label{superminkowski}\ee where $ \Lambda[\theta^1,\theta^2,\bar\theta^1,\bar\theta^2]$ is the Grassmann (or exterior) algebra generated by the odd variables $\theta^1,\theta^2,\bar\theta^1,\bar\theta^2$.

We will denote the coordinates (or generators) of the superspace
as
\begin{eqnarray*}&x^\mu, \qquad &\mu=0,1,2,3 \quad \hbox{(even
coordinates),}\\&\theta^\alpha,\bar \theta^{\dot\alpha},\qquad&
\alpha, \dot \alpha=1,2 \quad \hbox{(odd coordinates),}
\end{eqnarray*}
and a  superfield, in terms of its {\sl field components}, as
\begin{align*}\Psi(x, \theta,\bar \theta)=&\psi_0(x)+\psi_\alpha(x)\theta^\alpha+\psi'_{\dot\alpha}(x)\bar\theta^{\dot\alpha}+
\psi_{\alpha \beta} (x)\theta^{ \alpha} \theta^{ \beta}+\psi_{ \alpha\dot\beta} (x) \theta^{\alpha}\bar\theta^{\dot\beta}+\\&
\psi'_{\dot\alpha\dot\beta}(x)\bar\theta^{\dot\alpha}\bar\theta^{\dot\beta}+
\psi_{ \alpha \beta\dot\gamma}(x) \theta^{ \alpha} \theta^{ \beta}\bar\theta^{\dot\gamma}+
\psi'_{ \alpha\dot\beta\dot\gamma}(x) \theta^{ \alpha}\bar\theta^{\dot\beta}\bar\theta^{\dot\gamma}+
\psi_{ \alpha \beta\dot\gamma\dot\delta}(x) \theta^{ \alpha} \theta^{ \beta}\bar\theta^{\dot\gamma}
\bar\theta^{\dot\delta}.\end{align*}

\paragraph{Action of the Lorentz group SO(1,3).} There is an action of the double covering of the complexified Lorentz group,
$\rSpin(1,3)^c\approx\rSL(2,\C)\times \rSL(2,\C)$ over $\C^{4|4}$. The even coordinates  $x^\mu$ transform according to
the fundamental representation of $\rSO(1,3)$ ($V$),
$$x^\mu\mapsto \Lambda^\mu{}_\nu x^\nu,$$
while $\theta$ and $\bar \theta$ are Weyl spinors (or
half spinors). More precisely, the coordinates $\theta$ transform in one of the spinor representations, say
$S^+\approx(1/2,0)$ and $\bar \theta$ transform in the opposite chirality representation, $S^-\approx(0,1/2)$,
$$\theta^\alpha\mapsto {S}^\alpha{}_\beta\theta^\beta,\qquad \bar\theta^{\dot\alpha}\mapsto \tilde S^{\dot \alpha}{}_{\dot \beta}\theta^{\dot \beta}.$$

 The scalar superfields are invariant under the action of the Lorentz group,
$$\Psi(x,\theta,\bar\theta)= (R\Psi)(\Lambda^{-1} x, S^{-1}\theta,\tilde S^{-1}\bar\theta),$$ where $R\Psi$ is the superfield obtained by transforming the
field components
$$R\psi_0(x)=\psi_0(x),\quad R\psi_\alpha(x)=S_\alpha{}^\beta\psi_\beta(x), \;\; \dots$$

The hermitian matrices
$$\sigma^0=\begin{pmatrix}1&0\\0&1\end{pmatrix},\quad
\sigma^1=\begin{pmatrix}0&1\\1&0\end{pmatrix},\quad\sigma^2=\begin{pmatrix}0&-\ri\\\ri&0\end{pmatrix},\quad
 \sigma^3=\begin{pmatrix}1&0\\0&1\end{pmatrix},$$ define a
  $\rSpin(1,3)$-morphism
$$\begin{CD}S^+\otimes S^-@>>>V\\
s^\alpha \otimes  t^{\dot\alpha} @>>>
s^\alpha\sigma^\mu_{\alpha\dot\alpha}t^{\dot\alpha}.
\end{CD}$$

 \paragraph{Derivations.}
A {\sl left derivation} of degree $m=0,1$ of  a super algebra $\cA$ is a
linear map $D^L:\cA\mapsto \cA$ such that
$$
D^L(\Psi\cdot\Phi)=D^L(\Psi)\cdot \Phi +(-1)^{mp_\Psi}\Psi\cdot D^L(\Phi).
$$
Graded left derivations span a $\Z_2$-graded vector space (or {\sl supervector space}).

In general, linear maps over a supervector space are also  a $\Z_2$-graded vector space. A map has degree 0 if it preserves the
parity and degree 1 if it changes the parity. For the case of derivations of a commutative superalgebra,
 an even derivation has degree 0 as a linear map and an odd derivation
has degree 1 as a linear map.

In the same way one defines  {\sl right derivations},
$$
D^R(\Psi\cdot \Phi)=(-1)^{mp_\Phi}D^R(\Psi)\cdot \Phi +\Psi\cdot D^R(\Phi).
$$
Notice that  derivations of degree zero are both, right and left derivations.
Moreover, given a left derivation $D^L$ of degree $m$  one can define a
right derivation $D^R$ also  of degree $m$
 in the following way
\be D^R \Psi=
(-1)^{m(p_\Psi+1)}D^L\Psi.\label{rl}\ee

Let us now focus on the commutative superalgebra $\cO(\C^{4|4})$. We define the standard left derivations
\begin{align*}\partial_\alpha^L\Psi&=\psi_\alpha+
2\psi_{\alpha \beta} \theta^{ \beta}+\psi_{ \alpha\dot\beta} \bar\theta^{\dot\beta}+
2\psi_{ \alpha \beta\dot\gamma}  \theta^{ \beta}\bar\theta^{\dot\gamma}+
\psi'_{ \alpha\dot\beta\dot\gamma} \bar\theta^{\dot\beta}\bar\theta^{\dot\gamma}+
2\psi_{ \alpha \beta\dot\gamma\dot\delta}  \theta^{ \beta}\bar\theta^{\dot\gamma}
\bar\theta^{\dot\delta},\\
\partial^L_{\dot\alpha}\Psi&=\psi'_{\dot\alpha}-\psi_{ \beta\dot\alpha} \theta^{\beta}+
2\psi'_{\dot\alpha\dot\beta}\bar\theta^{\dot\beta}+
\psi_{ \gamma \beta\dot\alpha} \theta^{ \gamma}
\theta^{ \beta}
-2\psi'_{ \beta\dot\alpha\dot\gamma} \theta^{ \beta}\bar\theta^{\dot\gamma}+2
\psi_{ \gamma \beta\dot\alpha\dot\delta}
\theta^{ \gamma} \theta^{ \beta}
\bar\theta^{\dot\delta}.
\end{align*}
Also, using (\ref{rl})  one can define $\partial_\alpha^R, \partial^R_{\dot\alpha}$.

We consider now the odd left derivations
$$Q^L_\alpha=\partial^L_\alpha
-\ri\sigma^\mu_{\alpha\dot\alpha}\bar\theta^{\dot\alpha}\partial
_\mu,\qquad  \bar Q^L_{\dot \alpha}=-\partial^L_{
\dot \alpha}
+\ri\theta^{\alpha}\sigma^\mu_{\alpha\dot\alpha}\partial_\mu.$$ They satisfy the anticommutation rules
$$\{Q^L_\alpha, \bar Q^L_{\dot \alpha}\}=2\ri \sigma^\mu_{\alpha\dot\alpha}\partial_\mu, \qquad \{Q^L_\alpha, Q^L_{\beta}\}=\{\bar Q^L_{\dot
\alpha}, \bar Q^L_{\dot \beta}\}=0,$$ with $\partial_\mu=\partial/\partial x^\mu$. $Q^L$ and $\bar Q^L$ are the
{\sl supersymmetry charges} or {\sl supercharges}. Together with
$$P^\mu=-\ri \partial_\mu,$$ they form a Lie superalgebra,  the {\sl supertranslation
algebra}, which then acts on  the superspace $\C^{4|4}$.

Let us define another set of (left) derivations,
$$D^L_\alpha=\partial_\alpha
+\ri\sigma^\mu_{\alpha\dot\alpha}\bar\theta^{\dot\alpha}\partial_\mu,\qquad  \bar D^L_{\dot \alpha}=-\partial_{\dot \alpha}
-\ri\theta^{\alpha}\sigma^\mu_{\alpha\dot\alpha}\partial_\mu,$$ with anticommutation rules
$$\{D^L_\alpha, \bar D^L_{\dot \alpha}\}=-2\ri \sigma^\mu_{\alpha\dot\alpha}\partial_\mu, \qquad \{D^L_\alpha, D^L_{\beta}\}=\{\bar D^L_{\dot
\alpha}, \bar D^L_{\dot \beta}\}=0.$$ They also form a Lie
superalgebra, isomorphic to the supertranslation algebra. This can
be seen by taking
$$Q^L\rightarrow -D^L,\qquad \bar Q^L\longrightarrow \bar D^L.$$

It is easy to see that the supercharges anticommute with the
derivations $D^L$ and $\bar D^L$. For this reason, $D^L$ and $\bar
D^L$ are called {\sl supersymmetric covariant derivatives} or
simply {\sl covariant derivatives}, although they are not related
to any connection form.

We go now to the central definition.

\begin{definition}

A {\sl chiral superfield} is a superfield $\Phi$ such that
\begin{equation}\bar D^L_{\dot\alpha}\Phi=0.\label{chiralcons}\end{equation}
\end{definition}

Because of the anticommuting
properties of $D's$ and $Q's$, we have that
$$ \bar D^L_{\dot\alpha}\Phi=0\quad \Rightarrow\quad  \bar
D^L_{\dot\alpha}(Q^L_\beta\Phi)=0,\quad \bar D^L_{\dot\alpha}(\bar
Q^L_{\dot\beta}) \Phi=0.$$ This means that the supertranslation
algebra acts on the space of chiral superfields.

On the other hand, due to the derivation property,
$$\bar D^L_{\dot\alpha}(\Phi\Psi)=\bar D^L_{\dot\alpha}(\Phi)\Psi+(-1)^{p_\Phi}\Phi\bar
D^L_{\dot\alpha}(\Psi),$$ we have that the product of two chiral
superfields is again a chiral superfield.

\subsection{Shifted coordinates}
One can solve the constraint (\ref{chiralcons}). Notice that the
quantities
\be y^\mu=x^\mu
+\ri\theta^\alpha\sigma^\mu_{\alpha\dot\alpha}\theta^{\dot\alpha},\qquad
\theta^\alpha\label{shifted}\ee
satisfy
$$\bar D^L_{\dot\alpha}y^\mu=0,\qquad \bar
D^L_{\dot\alpha}\theta^\alpha=0.$$ Using the derivation
property, any superfield of the form
$$\Phi(y^\mu,\theta),\qquad \hbox{satisfies  }\quad \bar D^L_{\dot\alpha}\Phi=0$$ and so it  is a chiral
superfield. This is the general solution of (\ref{chiralcons}).

We can make the change of coordinates
$$x^\mu, \; \theta^\alpha,\; \bar\theta^{\dot\alpha}\;\longrightarrow \; y^\mu=x^\mu +i\theta^\alpha
\sigma^\mu_{\alpha\dot\alpha} \bar\theta^{\dot \alpha},\;
\theta^\alpha,\;\bar\theta^{\dot\alpha}.$$

A superfield may be expressed in both coordinate systems
$$\Phi(x, \theta, \bar\theta)=\Phi'(y, \theta, \bar\theta).$$ The
covariant derivatives and supersymmetry charges take the form
\begin{eqnarray*}
D^L_\alpha\Phi'  =\frac{\partial^L \Phi'}{\partial\theta^\alpha}
 +2i
\sigma^\mu_{\alpha\dot\alpha}\bar\theta^{\dot\alpha}
\frac{\partial^L \Phi' }{\partial y^\mu}\qquad \bar
D^L_{\dot\alpha}\Phi' =
-\frac{\partial^L \Phi'}{\partial\bar\theta^{\dot\alpha}},\\
  \bar Q^L_{\dot\alpha}\Phi'  =
-\frac{\partial^L \Phi'}{\partial\bar\theta^{ \dot\alpha}} +2i
\theta^\alpha\sigma^\mu_{\alpha\dot\alpha}\frac{\partial^L \Phi'
}{\partial y^\mu}\qquad  Q^L_\alpha\Phi' =
\frac{\partial^L\Phi'}{\partial\theta^\alpha}.
\end{eqnarray*}
In the new coordinate system the chirality condition is simply
$$\frac{\partial^L\Phi'}{\partial\bar\theta^{\dot\alpha}}=0,$$ so it
is similar to a holomorphicity condition on the $\theta$'s.

This shows that chiral scalar superfields are elements of the commutative superalgebra $\cO(\C^{4|2})=\C^\infty(\C^4)\otimes \Lambda[\theta^1,\theta^2]$.
In the previous sections
we realized this superspace as the big cell inside the chiral conformal superspace, which is the Grassmannian of $2|0$-subspaces of $\C^{4|1}$.

The complete (non chiral) conformal superspace is in fact the flag space of $2|0$-subspaces inside $2|1$-subspaces of  $\C^{4|1}$. On this supervariety one can put a reality condition, and the real Minkowski space is the big cell inside the flag.
It is instructive to compare Eq. (\ref{shifted}) with the incidence relation for the big cell of the flag manifold in Eq. (12) of Ref. \cite{flv}. We can then be convinced that the Grassmannian that we use to describe chiral superfields is inside the (complex) flag.

\bigskip

There are supersymmetric theories in physics (like Wess-Zumino models, or super Yang-Mills) that include in the formulation chiral superfields. In previous approaches  it has been difficult to formulate them on non commutative superspaces (with non trivial commutation relations of the odd coordinates). The reason was that the covariant derivatives are not anymore derivations of the noncommutative superspace, and the chiral superfields do not form
a superalgebra \cite{fel,flm}. Some proposals
to solve these problems include the partial (explicit) breaking of supersymmetry \cite{se,flm}. In our approach to quantization of superspace, the quantum chiral ring appears in a natural way, thus making possible the formulation of supersymmetric theories in non commutative superspaces. Also, the super variety and the supergroup acting on it become non commutative, the group law is not changed, so the physical symmetry principle remains intact. This is a virtue of the deformation based on quantum matrix groups.


\begin{thebibliography}{99}



\bibitem{be} F. A. Berezin, {\it
Introduction to superanalysis.} Edited by A. A. Kirillov. D. Reidel
Publishing Company, Dordrecht (Holland) (1987).  With an appendix by
V. I. Ogievetsky. Translated from the Russian by J. Niederle and R.
Koteck\'y. Translation edited by Dimitri Le\u\i tes.


\bibitem{cf} C. Carmeli,
L. Caston and  R. Fioresi {\it  Mathematical Foundation of
Supersymmetry}, with an appendix with I. Dimitrov,
EMS Ser. Lect. Math., European
Math. Soc., Zurich 2011.

\bibitem{cfl} D. Cervantes, R. Fioresi, M. A. Lledo,
{\it The quantum chiral Minkowski and conformal superspaces}, math.QA:1007.4469. (2010).


\bibitem{cfln} D. Cervantes, R. Fioresi, M. A. Lled\'{o} and  F. Nadal. {\it In preparation.}



\bibitem{co} A. Connes, {\it Non Commutative Geometry}. Academic
Press. (1994).


\bibitem{dm} P.  Deligne and J. Morgan,
{\it Notes on  supersymmetry (following J. Bernstein)}, in
``Quantum fields and strings. A
            course for mathematicians", Vol 1, AMS, (1999).



\bibitem{dg} M. Demazure and P. Gabriel, {\it Groupes
Alg\'ebriques, Tome 1.} Mason$\&$Cie, \'editeur. North-Holland
Publishing Company, The Netherlands (1970).

\bibitem{eh} D. Eisenbud and J. Harris, {\it The geometry of
schemes.} Springer Verlag, New York (2000).

 \bibitem{fi2} R. Fioresi, {\it Quantizations of flag manifolds
        and conformal space time.}
        Rev. Math. Phy., Vol. 9, n. 4, 453-465, (1997).

   \bibitem{fel} S. Ferrara and M.A. Lled\'o, {\it Some aspects of deformations of supersymmetric field theories.}
 JHEP 0005:008,(2000).
       \bibitem{flm} S. Ferrara,  M.A. Lled\'o and O. Maci\'a {\it Supersymmetry in noncommutative superspaces}
 JHEP 0309:068, (2003).


\bibitem{fi3} Fioresi, R.,  {\it A deformation of the
big cell inside the Grassmannian manifold $G(r,n)$},
Rev. Math. Phy. {\bf 11}, 25-40  (1999), .

\bibitem{fi4} Fioresi, R., {\it Quantum deformation of the flag variety}
        Communications in Algebra, Vol. 27, n. 11 (1999).


\bibitem{fi5} R. Fioresi, {\it On algebraic supergroups and
            quantum deformations}, J. of Alg. and its appl.,
        {\bf 2}, no. 4, pg 403-423, (2003).


\bibitem{fi6} R. Fioresi  {\it Supergroups, quantum supergroups and their
    homogeneous spaces}. Euroconference on Brane
    New World and Noncommutative Geometry (Torino, 2000).
    Modern Phys. Lett. A {\bf 16} 269--274 (2001).


\bibitem{fg} R. Fioresi, F. Gavarini, {\it Chevalley Supergroups},
preprint, to be published on  Memoirs of the AMS, 2008.

\bibitem{fh} R. Fioresi and C. Hacon. {\it Quantum coinvariant theory
for the quantum special linear group and quantum Schubert varieties.}
Journal of Algebra, {\bf 242}, no. 2, 433-446, (2001)


\bibitem{flv} R. Fioresi, M. A. Lledo, V. S. Varadarajan.
{\it The Minkowski and conformal superspaces},
        JMP, 113505,
        {\bf 48}, (2007).


\bibitem{fu} W. Fulton, {\it Young Tableaux}, Cambridge University Press,
Cambridge, (1997).





            \bibitem{ha} R. Hartshorne, {\it Algebraic Geometry},
GTM, Springer Verlag, (1991).



\bibitem{kn} M. Kotrla and  J. Niederle.
{\it  Supertwistors And Superspace.} Czech. J.Phys. B {\bf 35}, 602,
(1985).


\bibitem{ma1} Y. Manin. {\it Gauge field theory and complex
geometry.} Springer Verlag, (1988). (Original Russian edition in
1984).

\bibitem{ma2} Y. Manin,  {\it Multiparametric quantum deformation of
             the general linear supergroup}, Comm. Math. Phy., {\bf 123},
             163-175, (1989).



\bibitem{pe} R. Penrose. {\it  Twistor algebra}. J. Math. Phys. {\bf 8}, 345-366, (1967).




\bibitem{P} Phung, Ho Hai, {\it On the structure of the
quantum supergroups $\rGL_q(m|n)$}, q-alg$\slash$9511-23, (1999).

\bibitem{se} N. Seiberg, {\it Noncommutative superspace, N = 1/2 supersymmetry, field theory and string theory.}
 JHEP 0306:010, (2003).


\bibitem{va} V. S. Varadarajan, {\it Supersymmetry for mathematicians: an
introduction}. Courant Lecture Notes,1.  AMS (2004).


\bibitem{cfln} D. Cervantes, R. Fioresi, M. A. Lled\'{o} and  F. Nadal. {\it In preparation.}





\end{thebibliography}
\end{document}